\documentclass[12pt]{article}
\usepackage{amssymb, amsmath,cite}    
\usepackage{amsthm,bm}
\usepackage{xcolor}
\usepackage{enumerate}
\usepackage[labelfont={color=black,footnotesize},textfont={color=black,footnotesize,it,up}]{caption} 
\usepackage{url}
\usepackage[titletoc,toc,title]{appendix}
\theoremstyle{definition}
\newtheorem{lemma}{Lemma}[section]

\newtheorem{thm}{Theorem}[section]
\newtheorem{remark}{Remark}[section]
\newtheorem{Def}{Definition}

\newcommand{\norm}[1]{\left\lVert#1\right\rVert}
\newcommand{\abs}[1]{\left|#1\right|}

\usepackage{subfig,tikz}
\usetikzlibrary{decorations.pathmorphing}
\usepackage{graphics}

\newcommand{\Tr}{\,{\rm Tr}}

\newcommand{\td}{\text{d}}
\def\be{\begin{equation}}
\def\ee{\end{equation}}
\def\bea{\begin{eqnarray}}
\def\eea{\end{eqnarray}}

\setlength{\textwidth}{5.8in}
\setlength{\textheight}{9in}
\hoffset=-.37in
  \addtolength{\voffset}{-2cm}
  \addtolength{\textheight}{-.97cm}
\addtolength{\textwidth}{1cm}
\addtolength{\hoffset}{-1cm}

\title{\bf{Proof of the local mass-angular momenta inequality for  $U(1)^2$ invariant black holes }}

\author{Aghil Alaee\footnote{aak818@mun.ca } \,\,and  Hari K. Kunduri\footnote{hkkunduri@mun.ca } \\ \\
\small \sl $^a$ Department of Mathematics and Statistics, \\  \small \sl Memorial University of Newfoundland \\ \small \sl St John's NL A1C 4P5, Canada}

\date{}

\begin{document}

\maketitle






\begin{abstract}
We consider initial data for extreme vacuum asymptotically flat black holes with $\mathbb{R} \times U(1)^2$ symmetry. Such geometries are critical points of a mass functional defined for a wide class of asymptotically flat, `$(t-\phi^i)$' symmetric maximal initial data for the vacuum Einstein equations. We prove that the above extreme geometries are local minima of mass amongst nearby initial data (with the same interval structure) with fixed angular momenta.  Thus the ADM mass of nearby data $m\geq f(J_1,J_2)$ for some function $f$ depending on the interval structure.  The proof requires that the initial data of the critical points satisfy certain conditions that are satisfied by the extreme Myers-Perry and extreme black ring data. 
\end{abstract}

\section{Statement of the main result} 
Dain has proven the inequality $m \geq |J|$ for complete, maximal, asymptotically flat axisymmetric vacuum initial data to the 3+1 dimensional Einstein equation. Here $m$ is the ADM mass associated with the data and $J$ is the conserved angular momenta associated with the $U(1)$ isometry \cite{dain2006proof,Dain:2005vt ,dain2008proof}.  A thorough account of this program with references to further generalizations can be found in the review \cite{dain2012geometric}.  A natural problem is to investigate whether these results can be generalized to higher dimensions. The area-angular momenta inequalities (see \cite{dain2012geometric} for a survey) have been shown to admit such a generalization in all dimensions $D$ for black holes with $U(1)^{D-3}$ rotational isometries \cite{hollands2012horizon}.  Here we will focus on extending mass-angular momenta inequalities in $D=5$, as this is the only other possibility that admits asymptotically flat spacetimes with these isometries. 

In previous work \cite{alaee2014mass} we have constructed a mass functional $\mathcal{M}$ valid for a broad class of maximal, asymptotically flat, $U(1)^2$-invariant, $(t-\phi^i)$-symmetric, vacuum initial data. The mass functional evaluates to the ADM mass for this class and is a lower bound for the mass of general biaxisymmetric data.  We also showed that the critical points of this mass functional amongst this class of data are precisely the $\mathbb{R} \times U(1)^2$-invariant vacuum solutions of the five-dimensional Einstein equation.

Our result concerns the subset of stationary, biaxisymmetric data that represent maximal slices of extreme black holes. The uniqueness results of Figueras and Lucietti \cite{figueras2010uniqueness} imply that, for fixed angular momenta $J_1,J_2$ and interval structure, there is \emph{at most} one asymptotically flat extreme black hole.  We will consider the case where an extreme solution exists.  Then for a fixed structure we can write the mass of the extreme black hole as $m_{ext}= f(J_1,J_2)$ for some function $f$ which depends on the interval structure.  We have shown (under suitable conditions) that for small variations with fixed angular momenta about the extreme black hole initial data, the mass $m_{ext}$  is a minimum; that is 
\begin{equation}
m \geq f(J_1,J_2)
\end{equation} 
Note that $m$ could be the mass of a dynamic black hole.  This is shown by demonstrating that the extreme black holes are local minima of the mass functional. Of course, \emph{within} the two explicitly known families of stationary black holes, the extreme Myers-Perry \cite{Myers1986} and extreme doubly-spinning black ring \cite{pomeransky2006black} for fixed angular momenta, the extreme member of the family has the minimum mass, as is the case for Kerr. However, for more general interval structure,  there is no reason to expect this to occur, or indeed that a non-extreme family of solutions with a given interval structure contains an extreme limit. 

We will consider maximal initial data sets for the Einstein vacuum equations that consist of a triple $(\Sigma,h_{ab},K_{ab})$  where $\Sigma$ is complete, simply connected  Riemannian manifold with two asymptotic ends, $h_{ab}$ is a Riemannian metric , and $K_{ab}$ is a trace-free symmetric tensor field  which satisfies the vacuum  constraints
\begin{eqnarray}\label{eq:constraints}
R_h= K^{ab}K_{ab}\qquad
\bm{\nabla}^bK_{ab}= 0
\end{eqnarray}
where $R_h$ and $\bm{\nabla}$ are the scalar curvature and Levi-Civita connection with respect to $h_{ab}$. Let $m_i$ be Killing vectors generating the $U(1)^2$ symmetry of the data.  We have $\mathcal{L}_{m_i} h_{ab} = \mathcal{L}_{m_i} K_{ab} = 0$. We consider the class of metrics of the form
\begin{equation}\label{htphi}
h_{ab}=e^{2v}\tilde{h}_{ab}\qquad \tilde{h}_{ab}=e^{2U}\left(\td\rho^2+\td z^2\right)+\lambda'_{ij}\td\phi^i\td\phi^j
\end{equation} 
where $U = U(\rho,z)$ is a smooth function, $\lambda'=[\lambda'_{ij}]$ is a positive definite $2\times 2$ symmetric metric with $\det\lambda'=\rho^2$ and $\phi^i$ are coordinates with periodicity $2\pi$ adapted to the Killing vectors $m_i$.  Note that we assume that the action of the  $U(1)^2$ isometry is orthogonally transitive. We expect that this assumption can be removed~\cite{dain2008proof}.  (Of course, if the data arises from a \emph{stationary} spacetime, this assumption can be removed). 

In the following we will \emph{not} assume the data is $t-\phi^i$ symmetric. Rather, we restrict attention to metrics of the form \eqref{htphi} but we allow for general axisymmetric extrinsic curvature. As has been proved in \cite{alaee2014mass}, one can always decompose $K_{ab}$ as
\begin{equation} \label{decomposition}
K_{ab} = \mathcal{K}_{ab} + H_{ab}
\end{equation} where $\mathcal{K}_{ab}$ is the $t-\phi^i$-symmetric part of the extrinsic curvature. Recall that $(t-\phi^i)$-symmetry implies that under the diffeomorphism $\phi^i \to -\phi^i$, we have $h_{ab} \to h_{ab}, \mathcal{K}_{ab} \to -\mathcal{K}_{ab}$ \cite{alaee2014mass}.

We now briefly review the construction of the mass functional which is defined for $t-\phi^i$ symmetric data $(\Sigma,h,\mathcal{K})$. Since $\mathcal{K}_{ab}$ is automatically traceless, using the divergence-less condition and the property $\Sigma$ is simply connected \cite{alaee2014small}, we can express $\tilde{K}_{ab}=e^{2v}\mathcal{K}_{ab}$ in a compact form.  Define two scalar potentials $Y^i$ and one-forms
\begin{equation}
S^i = \frac{1}{2 \det \lambda'} i_{m_1} i_{m_2} \star \td Y^i
\end{equation}  Note $\td \star S^i = 0$. Then an \emph{arbitrary} divergenceless $t-\phi^i$-symmetric extrinsic curvature can be expressed as \cite{alaee2014small}
\begin{equation}\label{exttphi}
\tilde{K}_{ab}=\frac{2}{\det\lambda'}\Bigg[\left(\lambda'_{22} S^1{}_{(a}m_1{}_{b)}-\lambda'_{12}S^2{}_{(a}m_1{}_{b)}\right)+\left(\lambda'_{11} S^2{}_{(a}m_2{}_{b)}-\lambda'_{12}S^1{}_{(a}m_2{}_{b)}\right)\Bigg].
\end{equation}  Hence for $(t-\phi^i)$ symmetric initial data, the extrinsic curvature is completely characterized by the scalar potentials $Y^i$ as well as the metric functions $\lambda'_{ij}$. One can show \cite{alaee2014small} that these potentials are simply the pull-backs of the spacetime twist potentials defined in the usual way, i.e. $\td Y^i = \star_5(m_1 \wedge m_2 \wedge \td m_i)$. Moreover, these potentials are related to the angular momenta of the data by
\begin{equation}
J_i=\frac{\pi}{4}\left[Y^i(\rho=0,z)-Y^i(\rho=0,-z)\right]
\end{equation}
In terms of the conformal data $(\tilde{h}_{ab},\tilde{K}_{ab},v)$ the constraint equations reduce to the Lichnerowiscz equation for $v$:
\begin{gather}
\Delta_{\tilde{h}}\Phi-\frac{1}{6} R_{\tilde h}\Phi  +\frac{1}{6}\tilde{K}_{ab}\tilde{K}^{ab}\Phi^{-5}=0.\label{eq:Lich}
\end{gather}  where $\Phi = e^{2v}$.
\begin{remark}\cite{alaee2014mass} Let $(\Sigma,h,\mathcal{K})$ be an asymptotically flat, $(t-\phi^i)$-symmetric, vacuum initial data set. Such data can be completely characterized by $\Sigma$, its $U(1)^2$ action, and a triple $u = (v,\lambda',Y)$ where $v$ is a scalar, $\lambda'$ is a positive definite symmetric matrix with determinant $\rho^2$, and $Y=(Y^1,Y^2)^t$ is a column vector (the function $U$ is found by solving a Poisson equation arising from \eqref{eq:Lich}). We will denote such data simply by $(\Sigma,u)$. \label{tphiremark}
\end{remark}
  
Let $\rho$, $z$, $\phi$ be cylindrical coordinates in Euclidean $\mathbb{R}^3$ with metric  $\delta_3=\td\rho^2+\td z^2+\rho^2\td\phi^2$.   Note all functions only depend on $\rho$ and $z$.  Then by \cite{alaee2014mass} we have the following mass functional defined for $(\Sigma,u)$
\begin{eqnarray}
\mathcal{M}(u)=\frac{1}{8}\int_{\mathbb{R}^3}\left(-\frac{\det\nabla\lambda'}{2\rho^2}+e^{-6v}\frac{\nabla Y^t\lambda'^{-1}\nabla Y}{2\rho^2}+6\left(\nabla v\right)^2\right)\, \, \td\Sigma-\frac{\pi}{4}\sum_{\text{rods}}\int_{I_i}\log V_i\,\td z
\end{eqnarray}
where $\td\Sigma=\rho\,\td\rho\td z\td\phi$ and $\nabla$ are respectively the volume element and connection with respect to $\delta_3$, and $V_i$ is defined by 
\begin{equation}
V_i(z)=\lim_{\rho\to 0}\frac{2\sqrt{\rho^2+z^2}\lambda'_{ij}w^iw^j}{\rho^2},\qquad z\in I_i=(a_i,a_{i+1}),\quad w^i\in\mathbb{Z}
\end{equation}
where $\lambda'_{ij}w^j = O(\rho^2)$ as $\rho \to 0$ with $w=w^i\frac{\partial}{\partial\phi_i}$ is the Killing vector vanishing on the rod $I_i$ Note that $\phi$ is an auxiliary coordinate with period $2\pi$ and the functional can be defined over the orbit space $\mathcal{B}\cong\Sigma/U(1)^2$ \cite{alaee2014mass}.  $\mathcal{B}$ is a two-dimensional manifold with boundary and corners \cite{hollands2008uniqueness} and the boundary and asymptotic conditions on the various functions which parametrize the data are given in Section II of \cite{alaee2014mass}.  We record them here for convenience.  

To understand the decay in the asymptotic regions, we define new coordinates
\begin{equation}
x \equiv \frac{z}{\sqrt{\rho^2 + z^2}}\; , \qquad r \equiv \left[2\sqrt{\rho^2+z^2}\right]^{1/2}
\end{equation}  where $x \in [-1,1]$ and $r \in (0,\infty)$. Observe that $\delta_3 = r^2(\td r^2 +\tfrac{r^2}{4}[ (1-x^2)^{-1} \td x^2 + (1-x^2)\td \phi^2])$. Note the boundary $\rho=0$ corresponds to $x= \pm 1$ and $r \to 0$ corresponds to an asymptotic end which can be either asymptotically flat or cylindrical whereas $r \to \infty$ corresponds to the asymptotically flat end where the ADM mass is defined.  We require:
\begin{enumerate}[(a)]
\item as $r\to\infty$  
\begin{gather} 
v=o_{1}(r^{-1}),\quad\lambda'_{ij}-\sigma_{ij}=\frac{f_{il}\sigma_{lj}}{r^2}+\sigma_{ij}o_1(r^{-2}),\label{Asympvf}\\
V=\frac{\bar{V}(x)}{r^2}+o_1(r^{-2}),\quad \int_{-1}^1\bar{V}(x)\,\td x=0.\label{AsympV}
\end{gather}
where  $\sigma_{ij}=\frac{r^2}{2}\text{diag}\left(1+x,1-x\right)$  and $f_{ij}$ is a diagonal matrix with $\text{Tr}(f_{ij})=0$. This implies that the geometry approaches the flat metric on $\mathbb{R}^4$ at large $r$. 
\item As $r\to 0$ for an asymptotically flat end we have
\begin{gather}
v=-2\log (r)+o_{1}(1),\quad\lambda'_{ij}-\sigma_{ij}=f_{il}\sigma_{lj}r^2+\sigma_{ij}o_1(r^{2}),\\
V=\bar{V}(x)r^2+o_1(r^{2}),\quad \int_{-1}^1\bar{V}(x)\,\td x=0.\label{End1}
\end{gather}
\item As $r\to 0$ for an asymptotically cylindrical end with topology $\mathbb{R}^+\times N$ where $N\cong S^3,S^1\times S^2, L(p,q)$ we have
\begin{gather}
v=-\log (r)+o_{1}(r^{1}),\quad\lambda'_{ij}-\bar{\sigma}_{ij}=o_1(r^{2}),\quad
V=O_1(1)\label{End2}
\end{gather}
where $h^c=e^{2V} \frac{ \td x^2}{4(1-x^2)}+\bar{\sigma}_{ij}\td\phi^i\td\phi^j$ is the metric on $N$.
\end{enumerate}

\begin{remark} \label{remark1} The mass functional is defined for $t-\phi^i$ symmetric data $(\Sigma,u)$ and it equals the ADM mass. The ADM mass of  \emph{general} initial data $(\Sigma,h,K)$ satisfies \cite{alaee2014mass}
\begin{equation}
m \geq \mathcal{M}(u)
\end{equation} where $u = (v,\lambda',Y)$ is constructed from the corresponding $\mathcal{K}_{ab}$ associated to $K_{ab}$ by the decomposition \eqref{decomposition}. The equality is achieved if and only if the original initial data set is $t-\phi^i$ symmetric. 
\end{remark}

From now on we restrict attention to the mass functional, as it is a lower bound for the mass of our original initial data.  We set $\varphi=(\bar{v},\bar{\lambda}', \bar{Y})$  where $\bar{\lambda'}$ is a symmetric $2 \times 2$ matrix such that $\det\bar{\lambda'}=0$.  As will be explained in following sections, $\varphi$ will represent a perturbation about some fixed initial data $u_0$ defined in Definition \ref{Def1} . This should consist of five free degrees of freedom, and the apparent restriction $\det\bar{\lambda'} =0$ is simply a gauge choice.  Let $\Omega$ be a (unbounded) domain and  we introduce the following weighted spaces of $C^1$ functions with norm
\begin{equation}
\norm{f}_{C^1_{\beta}(\Omega)}=\sup_{x\in \Omega}\{\sigma^{-\beta}\abs{f}+\sigma^{-\beta+1}\abs{\nabla f}\}
\end{equation} is finite
with $\beta<-1$ and $\sigma=\sqrt{r^2+1}$ and for a column vector and a matrix we define respectively
\begin{equation}
\abs{\bar{Y}} \equiv \left(\bar{Y}^t \lambda'^{-1}_0 \bar{Y}\right)^{1/2}\;, \quad
\abs{\bar{\lambda}'} \equiv \left(\text{Tr}\left[\bar{\lambda}'^t\bar{\lambda}'\right]\right)^{1/2}
\end{equation}
Let $\rho_0> 0$ be a constant and $K_{\rho_0}$ be the cylinder $\rho\leq \rho_0$ in $\mathbb{R}^3$. We define the domain $\Omega_{\rho_0}=\mathbb{R}^3 \backslash K_{\rho_0}$. The perturbation $\bar{Y}$ and $\bar{\lambda}$ are assumed to vanish in $K_{\rho_0}$. This is consistent with the physical requirement that the
perturbations keep fixed the angular momenta $J_i$ and fixed orbit space. The Banach space $B$ is defined by
\begin{equation}
\norm{\varphi}_{B}=\norm{\bar{v}}_{C^1_{\beta}(\mathbb{R}^3)}+\norm{\bar{\lambda}'}_{C^1_{\beta}(\Omega_{\rho_0})}+\norm{\bar{Y}}_{C^1_{\beta}(\Omega_{\rho_0})}
\end{equation}
Now we define the class of extreme data. Note that we will denote non-negative constants which depend on parameters of data such as mass and angular momenta by $C$, $C_i$, and $C'$. 
\begin{Def}\label{Def1}
The set of \emph{extreme class} $E$ is the collection of data arising from extreme, asymptotically flat, $\mathbb{R}\times U(1)^2$ invariant black holes which consist of triples $u_0 = (v_0,\lambda'_0,Y_0)$ where $v_0$ is a scalar, $\lambda'_0=[\lambda_{ij}]$ is a positive definite $2\times 2$ symmetric matrix, and $Y_0$ is a column vector with the following bounds for $\rho\leq r^2$
\begin{enumerate}
\item $\frac{\nabla Y_0^t\lambda^{-1}_0\nabla Y_0}{X_0}\leq Cr^{-4}$ and $e^{-2v_0}\frac{\nabla Y_0^t\lambda^{-1}_0\nabla Y_0}{X_0}\leq Cr^{-2}$ in $\mathbb{R}^3$
 where $\lambda_0=e^{2v_0}\lambda'_0$
 \item $C_1\rho I_{2\times 2}\leq\lambda_{0}\leq C_2 \rho I_{2\times 2}$ and $C_3\rho^{-1}I_{2\times 2}\leq\lambda^{-1}_{0}\leq C_4\rho^{-1}I_{2\times 2}$ in $\Omega_{\rho_0}$
\item $\rho^2\leq X_0$ in $\mathbb{R}^3$ where $X_0=\det\lambda_0$ and $X_0^2\leq C' \rho^4$ in $\Omega_{\rho_0}$ where $\lim_{\rho_0\to 0}C'=\infty$
\item $\abs{\nabla v_0}^2\leq C r^{-4}$, $\abs{\nabla\ln X_0}^2\leq C\rho^{-2}$ in $\mathbb{R}^3$ and $\abs{\nabla\lambda_0\lambda^{-1}_0}^2\leq C\rho^{-2}$ in $\Omega_{\rho_0}$ 
\end{enumerate} 
\end{Def} 
The choice of these bounds are consistent with the two known extreme black holes initial data, extreme Myers-Perry and extreme doubly spinning black ring.  These inequalities are difficult to prove directly because the expressions in terms of the $(\rho,z)$ coordinates are unwieldy. However, we have checked numerically that these bounds hold for a wide range of parameters for these two cases. It is possible that there exists an extreme data which has slightly different bounds (i.e. this would correspond to another extreme black hole with different orbit space).  In that case we expect the arguments used in the proof of theorem \ref{main theorem} can be extended to take into account these different estimates.

Note that by what has been proved in \cite{alaee2014mass}, $\mathcal{M}$ evaluated on the extreme class is non-negative and given by 
\begin{equation}\label{massextreme}
\mathcal{M}_{\text{cp}}=\frac{3}{8}\int_{\mathbb{R}^3}e^{-6v_0}\frac{\abs{\nabla Y_0}^2 }{2\rho^2}\, \,\td \Sigma
\end{equation}
 where $\abs{\nabla Y_0}^2=\nabla Y_0^t\lambda'^{-1}_{0}\nabla Y_0$. Now denote an extreme data of this class by $u_0 = (v_0,\lambda'_0,Y_0)\in E$. Then we have the following result

\begin{thm}\label{main theorem} $\phantom{Next line}$
\begin{enumerate}[(a)]
\item  Let  $\varphi=(\bar{v},\bar{\lambda}', \bar{Y}) \in B$ where $B$ is the Banach space defined above and $u_0=({v}_0,{\lambda}_0',{Y_0})\in E$ is extreme data with  fixed $\mathcal{B}$. Then the functional $\mathcal{M}:B\rightarrow \mathbb{R}$  has a strict local
minimum at $u_0$. That is, there exists $\epsilon>0$ such that
\begin{equation}
\mathcal{M}(u_0+\varphi)>\mathcal{M}(u_0)
\end{equation}
for all $\varphi\in B$ with $\norm{\varphi}_{B}<\epsilon$ and $\varphi\neq 0$.
\item  Let $(\Sigma,h_{ab},K_{ab})$ be an asymptotically flat, maximal, $U(1)^2$-invariant, vacuum initial data with mass $m$ and angular momenta $J_1$ and $J_2$ and fixed orbit space $\mathcal{B}$ such that the data satisfies the boundary conditions given by \eqref{Asympvf}-\eqref{End2}. Let $u = (v,\lambda',Y)$ describe the associated $t-\phi^i$ symmetric data as in Remark \ref{remark1} and write $u = u_0 + \varphi$ where $u_0$ is extreme data with the same $J_1,J_2$ and orbit space $\mathcal{B}$.  If $\varphi$ is sufficiently small (as in (a)) then 
\begin{equation}
 m\geq f(J_1,J_2) = \mathcal{M}(u_0)
 \end{equation} for some $f$ which depends on the orbit space $\mathcal{B}$. Moreover, $m=f(J_1,J_2)$ for data $(\Sigma,h,K)$ in a neighbourhood if and only if the data are  extreme data.
\end{enumerate}   
\end{thm}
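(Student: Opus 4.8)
The plan is to reduce part (a) to a coercivity estimate for the second variation of $\mathcal{M}$ at the critical point $u_0$, and to obtain part (b) as an immediate corollary via Remark \ref{remark1}. First I would Taylor expand, writing $\mathcal{M}(u_0+\varphi)=\mathcal{M}(u_0)+D\mathcal{M}(u_0)[\varphi]+\tfrac12 D^2\mathcal{M}(u_0)[\varphi,\varphi]+R(\varphi)$. Since $u_0\in E$ arises from a stationary $\mathbb{R}\times U(1)^2$ solution it is a critical point of $\mathcal{M}$ (established in \cite{alaee2014mass}), so after using the Euler--Lagrange equations the first-order term is a pure boundary integral; using the decay encoded in the weighted space $B$ (recall $\beta<-1$) together with the asymptotic and rod conditions \eqref{Asympvf}--\eqref{End2}, I would show this boundary term vanishes, leaving $\mathcal{M}(u_0+\varphi)-\mathcal{M}(u_0)=\tfrac12 D^2\mathcal{M}(u_0)[\varphi,\varphi]+R(\varphi)$.

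The heart of the argument is the lower bound $D^2\mathcal{M}(u_0)[\varphi,\varphi]\ge c\,\norm{\varphi}_B^2$ for some $c>0$. The three bulk terms of $\mathcal{M}$, subject to the constraint $\det\lambda'=\rho^2$ and the gauge choice $\det\bar\lambda'=0$, assemble into the energy of a map into a nonpositively curved target (the $(\lambda',Y)$ fields, familiar from the reduction of five-dimensional vacuum gravity on $T^2$), together with the manifestly positive Dirichlet term $6(\nabla v)^2$. The standard second-variation formula for harmonic energy then bounds $D^2\mathcal{M}(u_0)[\varphi,\varphi]$ below by the covariant Dirichlet energy of the perturbation, since the curvature contribution enters with a favorable sign; concretely this absorbs the sign-indefinite piece $-\det\nabla\bar\lambda'/(2\rho^2)$ and yields a sum of positive-definite terms controlled by $\abs{\nabla(\bar\lambda'\lambda_0'^{-1})}^2$, $e^{-6v_0}\abs{\lambda_0'^{-1/2}\nabla\bar Y}^2/\rho^2$, and $\abs{\nabla\bar v}^2$. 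I would then upgrade this to control of the full norm using weighted Poincar\'e- and Hardy-type inequalities on $\mathbb{R}^3$ and on $\Omega_{\rho_0}$, with the a priori bounds of Definition \ref{Def1} --- the uniform comparison $C_1\rho I\le\lambda_0\le C_2\rho I$, the gradient bounds $\abs{\nabla v_0}^2\le Cr^{-4}$ and $\abs{\nabla\ln X_0}^2\le C\rho^{-2}$, and the determinant bounds $\rho^2\le X_0$ and $X_0^2\le C'\rho^4$ --- supplying the uniform coefficient comparisons needed up to the axis and at both ends.

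The remainder $R(\varphi)$ collects all contributions of order three and higher, arising from expanding $e^{-6v}$, $\lambda'^{-1}$, and $\det\nabla\lambda'$; each factor is controlled by the $C^1_\beta$ norms, and using Definition \ref{Def1} to dominate the background-dependent coefficients I would show $\abs{R(\varphi)}\le C\norm{\varphi}_B^3$. Then for $\norm{\varphi}_B<\epsilon$ small the quadratic term dominates, $\mathcal{M}(u_0+\varphi)-\mathcal{M}(u_0)\ge\tfrac{c}{2}\norm{\varphi}_B^2-C\norm{\varphi}_B^3>0$ for $\varphi\ne 0$, proving (a). For (b), Remark \ref{remark1} gives $m\ge\mathcal{M}(u)$ with $u=u_0+\varphi$ the $t-\phi^i$ symmetric data associated to $(\Sigma,h,K)$; combining with (a) yields $m\ge\mathcal{M}(u)\ge\mathcal{M}(u_0)=f(J_1,J_2)$. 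Equality forces both $m=\mathcal{M}(u)$ --- whence the data is $t-\phi^i$ symmetric by Remark \ref{remark1} --- and $\mathcal{M}(u)=\mathcal{M}(u_0)$ --- whence $\varphi=0$ by the strict minimum in (a) --- so the data coincides with the extreme data $u_0$.

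The main obstacle I anticipate is the coercivity of $D^2\mathcal{M}$: establishing that the positive-definite lower bound survives uniformly up to the axis $\rho=0$ and through both the asymptotically flat and asymptotically cylindrical ends, where the weights in the $C^1_\beta$ norms and the coefficients $e^{-6v_0}$ and $\lambda_0'^{-1}$ degenerate. Controlling the cross terms and the rod boundary integrals $\int_{I_i}\log V_i\,\td z$ so that they contribute with the correct sign --- rather than obtaining mere positive semidefiniteness --- is precisely the point at which the detailed estimates of Definition \ref{Def1} are indispensable.
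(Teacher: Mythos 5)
Your overall architecture --- Taylor expansion about the critical point $u_0$, vanishing of the first variation, positivity of the second variation via the nonpositively curved target of the $\sigma$-model reduction, and part (b) as a corollary of Remark \ref{remark1} --- matches the paper's strategy, and your positivity mechanism is morally the same as the paper's five-dimensional Carter (linearized Mazur) identity, which is how the paper actually writes $\mathcal{E}''_{\varphi}(0)$ as an integral of a sum of squares. Your treatment of part (b), including the rigidity discussion, is essentially identical to the paper's. (One small point: the worry about the rod integrals $\int_{I_i}\log V_i\,\td z$ is moot, since $\bar\lambda'$ is required to vanish on the cylinder $K_{\rho_0}$, so $V_i$ does not vary.)

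However, there is a genuine gap in your functional-analytic setup: you claim coercivity in the form $D^2\mathcal{M}(u_0)[\varphi,\varphi]\geq c\norm{\varphi}_B^2$ and a remainder bound $\abs{R(\varphi)}\leq C\norm{\varphi}_B^3$, both measured in the Banach norm $\norm{\cdot}_B$, which is a weighted $C^1$ (sup-type) norm. An integral quadratic form such as the second variation can never dominate the square of a sup-norm --- concentrate a bump of fixed sup-norm on a shrinking region and the quadratic form tends to zero --- so the claimed coercivity is false as stated, and the final comparison $\tfrac{c}{2}\norm{\varphi}_B^2-C\norm{\varphi}_B^3>0$ does not close. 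The paper's essential device, which your proposal omits, is a \emph{two-norm} argument: it introduces the auxiliary weighted Hilbert space $\mathcal{H}$ (with $B\subset\mathcal{H}$), proves coercivity $\mathcal{E}''_{\varphi}(0)\geq\mu\norm{\varphi}^2_{\mathcal{H}}$ (Lemma \ref{coercive}, via the Carter identity, the Poincar\'e inequalities of Lemma \ref{Poincare}, and the estimates of Definition \ref{Def1}), and controls the higher-order terms not as a cubic in $\norm{\varphi}_B$ but as $\abs{\mathcal{E}''_{\varphi}(t)-\mathcal{E}''_{\varphi}(0)}\leq\epsilon\norm{\varphi}^2_{\mathcal{H}}$ with $\epsilon\to 0$ as $\norm{\varphi}_B\to 0$ (Lemma \ref{uniformlemma}), combined with the mean-value form of Taylor's theorem $\mathcal{M}(u_0+\varphi)-\mathcal{M}(u_0)=\tfrac12\mathcal{E}''_{\varphi}(t)$. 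Since both the coercivity and the error are quadratic in the \emph{same} Hilbert norm, the conclusion $(\mu-\epsilon)\norm{\varphi}^2_{\mathcal{H}}\leq\mathcal{E}''_{\varphi}(t)$ follows; your version cannot be repaired without this change, because $\norm{\varphi}_B^3$ is not controlled by $\norm{\varphi}^2_{\mathcal{H}}$ for small $\varphi$.
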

\noindent For the sake of illustration we mention two special cases of the theorem.   
\begin{enumerate}
\item In dimension 5, a possible horizon topology is  $H\cong S^3$. Consider fixed angular momenta $J_1$  and $J_2$ and fixed orbit space $\tilde{\mathcal{B}}$ consisting of a finite timelike interval (the event horizon) and two semi-infinite spacelike intervals extending to asymptotic infinity (representing rotation axes). Then the orbit space of the slice will be $\mathcal{B} \cong \tilde{\mathcal{B}}\backslash \{\text{horizon interval}\}$ which corresponds to slice topology $\Sigma\cong\mathbb{R}\times S^3$ \cite{alaee2014notes,alaee2014mass}. By the uniqueness theorem \cite{figueras2010uniqueness} extreme Myers-Perry solution is the unique solution with this orbit space and fixed angular momenta. Thus there exists $f(x,y)=3\left[\frac{\pi}{32}(\abs{x}+\abs{y})^2\right]^{1/3}$ such that mass of extreme Myers-Perry is equal to $f(J_1,J_2)$. Then by theorem \ref{main theorem} mass of any asymptotically flat, maximal, biaxisymmetric data sufficiently close (in the sense made precise above) with the same interval structure and  angular momenta is greater than $f(J_1,J_2)$. 
\item Now consider the horizon topology $H\cong S^2\times S^1$. Consider fixed angular momenta $J_1$  and $J_2$ and fixed orbit space $\tilde{\mathcal{B}}$ consisting a finite timelike interval, a finite spatial interval, and two semi-infinite intervals extending to asymptotic infinity. Then the orbit space of the slice will be $\mathcal{B} \cong \tilde{\mathcal{B}}\backslash \{\text{horizon interval}\}$ which corresponds to slice topology $\Sigma\cong S^2\times B^2\#\mathbb{R}^4$  \cite{alaee2014notes,alaee2014mass}. By the uniqueness theorem \cite{figueras2010uniqueness} the extreme doubly spinning black ring is the unique solution with orbit space $\tilde{\mathcal{B}}$ and fixed angular momenta. Thus there exist $f(x,y)=3\left[\frac{\pi}{4}\abs{x}(\abs{y}-\abs{x})\right]^{1/3}$\footnote{In \cite{alaee2014mass} there is a typo in equation (2). The correct expression is $M^3=\frac{27\pi}{4}J_1(J_2-J_1)$} such that mass of extreme doubly spinning black rings is equal to $f(J_1,J_2)$. Then by theorem \ref{main theorem} the mass of any asymptotically flat, maximal, biaxisymmetric data with the same orbit structure and fixed angular momenta is greater than $f(J_1,J_2)$. 
\end{enumerate}  

Theorem \ref{main theorem} is a local inequality which should be satisfied for a wide class of (possibly dynamical) black holes with a fixed interval structure with a geometry sufficiently near an extreme black hole.  One may expect to prove a global result showing that this inequality holds all data with fixed $J_1,J_2$ and $\mathcal{B}$. Such a global inequality has been proved in the electrovacuum in 3+1 dimensions \cite{chrusciel2009mass,dain2008proof}.   A major obstacle to extending this result to the present case is showing positivity of $\mathcal{M}$ for arbitrary interval structures consistent with asymptotic flatness. However, for a class of interval structures (including Myers-Perry black hole initial data) one can show $\mathcal{M} \geq 0$ \cite{alaee2014mass}.  We are currently investigating whether a global inequality can be demonstrated in this particular setting.  In this context, it is worth noting that $\mathbb{R} \times U(1)^2$-invariant vacuum spacetimes can be cast as harmonic maps from the orbit space to $SL(3,\mathbb{R}) / SO(3)$ \cite{hollands2008uniqueness}. The target space metric is easily checked to be Einstein with negative curvature (it is not conformally flat). This can be contrasted with the four-dimensional case where the $\mathbb{R} \times U(1)$-invariant vacuum solutions are harmonic maps to $SL(2,\mathbb{R})/SO(2) \cong \mathbb{H}^2$ equipped with its standard Einstein metric.  

Another open problem is to generalize this theorem to include multiple asymptotic ends, corresponding to multiple black holes \cite{chrusciel2008mass}.  

The proof of theorem \ref{main theorem} is given in Section \ref{proof}. The rest of the paper is  organized as follows. In Section \ref{critical} we find critical points of $\mathcal{M}$ and we prove uniform continuity of a one parameter family of functionals obtained from $\mathcal{M}$ and denoted by $\mathcal{E}_{\varphi}(t)$. In Section \ref{Carter}  we will use a Carter-type identity (a linearized version of Mazur's identity) to derive an identity for five dimensional spacetimes and we use this identity to prove positivity of the second variation of $\mathcal{E}_{\varphi}(t)$ at $t=0$. Finally, we prove a coercive condition for the second variation $\mathcal{E}''_{\varphi}(0)$. This is sufficient to demonstrate that $u_0$ is a strict minimum for $\mathcal{M}$.  
\section{Critical points of the mass functional $\mathcal{M}$}\label{critical}
In this section we will study the properties of second variation of mass functional $\mathcal{M}$. Let $\varphi\in B$ and consider the real-value function
\begin{equation}
\mathcal{E}_{\varphi}(t)\equiv \mathcal{M}(u_0+t\varphi)
\end{equation}
and we assume 
\begin{equation}
(v,\lambda',Y)\equiv(v(t),\lambda'(t),Y(t))=(v_0+t\bar{v},\lambda'_0+t\bar{\lambda}',Y_0+t\bar{Y})\label{relation1}
\end{equation}
where $\det\lambda'=\rho^2$. This choice for determinant of $\lambda'$ requires that $\det\bar{\lambda}=0$. Moreover we have
\begin{equation}
\lambda\equiv \lambda (t)=e^{2v}\lambda'(t)\qquad X\equiv X(t)=e^{4v}\rho^2\label{relation2}
\end{equation}
and $X_0=X(0)$. Then the first variation is 
{\small\begin{eqnarray}
\mathcal{E}'_{\varphi}(t)&=&\frac{1}{8}\int_{\mathbb{R}^3}\Bigg[12\nabla v.\nabla\bar{v}+\frac{e^{-6v}}{2\rho^4}\Bigg[\nabla Y^t\text{adj}(\bar{\lambda'})\nabla Y+2\nabla Y^t\text{adj}(\lambda')\nabla\bar{Y}-6\bar{v}\nabla Y^t\text{adj}(\lambda')\nabla Y\Bigg]\nonumber\\
&-&\frac{1}{2\rho^2}\text{Tr}\left(\text{adj}(\nabla\bar{\lambda'})\nabla\lambda'\right)\Bigg]\,d\Sigma
\end{eqnarray}}
The critical points of this variation ($\mathcal{E}'_\phi(0) =0$)  in \cite{alaee2014mass}  are given by 
 \begin{eqnarray}\label{EL1}
 G_X&\equiv&4\Delta_3v+\frac{\nabla Y^t{\lambda}^{-1}\nabla{Y}}{X} = 0\\ \label{EL2}
  G&\equiv&\nabla\cdot\left(\frac{\nabla\lambda'}{\rho^2}\right)+\frac{e^{2v}}{X^2}\nabla Y\nabla Y^t = 0\\ \label{EL3}
 G_Y&\equiv&\nabla\cdot\left(\frac{\lambda^{-1}\nabla Y}{X}\right) = 0
 \end{eqnarray}
On the other hand, the vacuum field equations for a $\mathbb{R} \times U(1)^2$-invariant spacetime are \cite{figueras2010uniqueness}
\begin{gather}\label{fieldeqns}
\begin{aligned}
G_{\lambda}&\equiv \nabla\cdot \left(\lambda^{-1}\nabla\lambda\right)+\frac{\lambda^{-1}}{X}\nabla Y\cdot\nabla Y^t = 0\\
G_{Y}&=\nabla\cdot \left(\frac{\lambda^{-1}}{X}\nabla Y\right) = 0
\end{aligned}
\end{gather} 
where $G_X=\text{Tr}\left(G_{\lambda}\right)$. It is straightforward to show these field equations \eqref{fieldeqns} are equivalent to critical points (\ref{EL1})-(\ref{EL3}) of $\mathcal{E}_{\varphi}$. This shows the critical points of the mass functional are the same as the stationary, biaxisymmetric vacuum solutions \cite{alaee2014mass} (written in spacetime Weyl coordinates with orbit space $\tilde{\mathcal{B}}$). However, for non-extreme black holes, this chart only covers the exterior region of the black hole spacetime and the manifold has an interior boundary.  In particular in these coordinates the mass functional is singular on the inner boundary. One can always find quasi-isotropic coordinates on the initial data slice $\Sigma$  to complete the manifold and compute the mass, but then the resulting geometry is \emph{not} a critical point of $\mathcal{M}$.  But for extreme black holes, the usual spacetime Weyl coordinates and quasi-isotropic coordinates coincide, and the mass functional is well defined on these critical points. This point is discussed in more detail\footnote{We thank S Dain for clarifying this point.} in \cite{dain2006variational} and {\cite{alaee2014mass}}. 

A calculation yields the second variation 
{\small\begin{eqnarray}
\mathcal{E}''_{\varphi}(t)&=&\frac{1}{8}\int_{\mathbb{R}^3}\Bigg(12\left(\nabla\bar{v}\right)^2-\frac{\det\nabla\bar{\lambda'}}{\rho^2}+\frac{e^{-6 v}}{\rho^4}\Bigg[2\nabla Y^t\text{adj}(\bar{\lambda'})\nabla{\bar{Y}}+\nabla \bar{Y}^t\text{adj}(\lambda')\nabla{\bar{Y}}\nonumber\\
&-&6\bar{v}\nabla Y^t\text{adj}(\bar{\lambda'})\nabla Y-12\bar{v}\nabla Y^t\text{adj}(\lambda')\nabla\bar{Y}+18\bar{v}^2\nabla Y^t\text{adj}(\lambda')\nabla Y\Bigg]
\Bigg)\,d\Sigma
\end{eqnarray}}
Note that the integrand of the functional $\mathcal{M}$ is singular at $\rho=0$. However, we have defined the Banach space $B$ only for functions $\bar{Y}$ and $\bar{\lambda}'$ with support in $\Omega_{\rho_0}$. Therefore, the domain of integration of the terms in which $\nabla \bar{Y}$ and $\nabla\bar{\lambda'}$ appear are in fact $\Omega_{\rho_0}$ and hence the integrand is regular for those terms.

We  now introduce axillary Hilbert spaces $\mathcal{H}_i$, which is defined in terms of the weighted Sobolev spaces
\begin{eqnarray}
\norm{\bar{v}}^2_{\mathcal{H}_1}&=&\int_{\mathbb{R}^3}\abs{\nabla\bar{v}}^2 r^{-2}\td\Sigma+\int_{\mathbb{R}^3}\abs{\bar{v}}^2r^{-4}\td\Sigma\\
\norm{\bar{\lambda}'}^2_{\mathcal{H}_2}&=&\int_{\Omega_{\rho_0}}\abs{\nabla\bar{\lambda}'}^2\rho^{-2}\td\Sigma+\int_{\Omega_{\rho_0}}\abs{\bar{\lambda}'}^2\rho^{-4}\td\Sigma\\
\norm{\bar{Y}}^2_{\mathcal{H}_3}&=&\int_{\Omega_{\rho_0}}\abs{\nabla\bar{Y}}^2\rho^{-2}\td\Sigma+\int_{\Omega_{\rho_0}}\abs{\bar{Y}}^2\rho^{-4}\td\Sigma
\end{eqnarray}
and its corresponding inner products.  The following auxiliary Hilbert space for $\phi$ with norm defined by
\begin{equation}
\norm{\varphi}_{\mathcal{H}}^{2}=\norm{\bar{v}}^2_{\mathcal{H}_1}+\norm{\bar{\lambda}}^2_{\mathcal{H}_2}+\norm{\bar{Y}}^2_{\mathcal{H}_3},
\end{equation}
with its corresponding inner product. We have $B\subset \mathcal{H}$ and the following P\'oincare inequalities
\begin{lemma}\label{Poincare}
Let $\varphi\in\mathcal{H}$ and $\delta\neq 0$ is a real number . Then
\begin{enumerate}[(a)]
\item $\abs{\delta}^{-2}\int_{\mathbb{R}^3}\abs{\nabla\bar{v}}^2r^{-2\delta-1}\td\Sigma\geq
\int_{\mathbb{R}^3}\abs{\bar{v}}^2r^{-2\delta-3}\td\Sigma$
\item $\abs{\delta}^{-2}\int_{\Omega_{\rho_0}}\abs{\nabla\bar{\lambda}'}^2\rho^{-2\delta}\td\Sigma\geq
\int_{\Omega_{\rho_0}}\abs{\bar{\lambda}'}^2\rho^{-2\delta-2}\td\Sigma$
\item $2\abs{\delta}^{-2}\int_{\Omega_{\rho_0}}\nabla\bar{Y}^t\nabla\bar{Y}\rho^{-3\delta}\td\Sigma\geq
3\int_{\Omega_{\rho_0}}\bar{Y}^t\bar{Y}\rho^{-3\delta-2}\td\Sigma$
\end{enumerate}
 
\end{lemma}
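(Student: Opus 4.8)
The three estimates are weighted Hardy (Poincar\'e) inequalities, and the plan is to reduce all of them to a single one-dimensional Hardy inequality in the appropriate radial variable. In case (a) that variable is the Euclidean radius $r=\sqrt{\rho^2+z^2}$ (so that $\td\Sigma=r^2\,\td r\,\td\omega$ with $\td\omega$ the round measure on $S^2$), and in cases (b) and (c) it is the cylindrical radius $\rho$ (so that $\td\Sigma=\rho\,\td\rho\,\td z\,\td\phi$). In each case the weight on the gradient term and the weight on the undifferentiated term differ by exactly two powers of the relevant radius, which is precisely the scaling that permits a Hardy estimate. Since $\nabla$ is the flat connection, I would first discard the transverse components of the gradient, using $\abs{\nabla f}^2\geq(\partial_r f)^2$ in (a) and $\abs{\nabla f}^2\geq(\partial_\rho f)^2$ in (b),(c); the dropped terms are nonnegative and only strengthen the inequality. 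For the matrix $\bar\lambda'$ and the vector $\bar Y$ I would observe that $\abs{\bar\lambda'}^2=\Tr[\bar\lambda'^t\bar\lambda']$ and $\bar Y^t\bar Y$ are sums of squares of the individual components, and $\abs{\nabla\bar\lambda'}^2$, $\nabla\bar Y^t\nabla\bar Y$ the corresponding sums of squared component gradients, so the matrix/vector statements follow by applying the scalar estimate componentwise and summing.

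The scalar ingredient is the elementary one-dimensional Hardy inequality: for $g$ vanishing (or decaying suitably) at the endpoints and any real $a$,
\begin{equation}
\int g'(t)^2\,t^{a}\,\td t\;\geq\;\frac{(a-1)^2}{4}\int g(t)^2\,t^{a-2}\,\td t .
\end{equation}
I would prove this by completing the square: expanding the nonnegative integral $\int\bigl(g'+\tfrac{a-1}{2t}g\bigr)^2 t^{a}\,\td t\geq 0$ and integrating the cross term $\int (g^2)'\,t^{a-1}\,\td t$ by parts yields exactly the stated difference, provided the boundary term $[\,g^2\,t^{a-1}\,]$ vanishes at both endpoints. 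Passing to polar/cylindrical coordinates as above converts each ambient estimate into this one-dimensional form at fixed angles (respectively fixed $z,\phi$), after which it remains only to match exponents and constants.

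Carrying out the matching: for (a) the gradient weight $r^{-2\delta-1}$ together with the $r^2$ from $\td\Sigma$ gives $a=1-2\delta$, so $(a-1)^2/4=\delta^2$, and integrating over $S^2$ produces exactly $\abs{\delta}^{-2}\int\abs{\nabla\bar v}^2 r^{-2\delta-1}\td\Sigma\geq\int\abs{\bar v}^2 r^{-2\delta-3}\td\Sigma$. For (b) the weight $\rho^{-2\delta}$ with the single power $\rho$ from $\td\Sigma$ again gives $a=1-2\delta$ and constant $\delta^2$, yielding the claim after integrating over $z$ and $\phi$. For (c) the weight $\rho^{-3\delta}$ gives $a=1-3\delta$, hence $(a-1)^2/4=9\delta^2/4$; multiplying by the prefactor $2\abs{\delta}^{-2}$ gives $2\abs{\delta}^{-2}\cdot 9\delta^2/4=9/2\geq 3$, so the stated inequality holds with room to spare.

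The one genuinely technical point, and the step I would treat most carefully, is the justification of the integration by parts, i.e.\ the vanishing of the boundary term $[\,g^2 t^{a-1}\,]$. For (b) and (c) this is immediate: $\bar\lambda'$ and $\bar Y$ are supported in $\Omega_{\rho_0}$ and vanish on $K_{\rho_0}$, so $g(\rho_0)=0$ kills the inner endpoint while the $C^1_\beta$ decay with $\beta<-1$ kills the contribution as $\rho\to\infty$. For (a) there is no support hypothesis and the integral runs over all of $\mathbb{R}^3$, so I would argue by density: establish the inequality first for $\bar v$ compactly supported away from $r=0$ and $r=\infty$ (where both boundary terms are absent) and then pass to the limit in $\mathcal{H}_1$, using that the two ends $r\to 0$ and $r\to\infty$ both carry the decay recorded in \eqref{Asympvf}--\eqref{End2}. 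The sign of $\delta$ affects only which endpoint the decay must control and is absorbed into this density argument; the constant $(a-1)^2/4=\delta^2$ is insensitive to it.
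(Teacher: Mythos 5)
Your proof is correct and is essentially the argument the paper uses: the paper's part (b) integrates $\nabla(\rho^{-2\delta}\bar{\lambda}'^2_{ij})\cdot\nabla(\ln\rho)$ by parts and applies Cauchy--Schwarz, which is exactly your completed-square/cross-term computation in the variable $\rho$, while part (a) is delegated to Bartnik's Theorem 1.3, whose proof is the same radial Hardy estimate you carry out (with $r$ read as the Euclidean radius of $\mathbb{R}^3$, the only reading under which the stated exponents match). Your explicit constant check in (c) ($9/2\geq 3$) and your attention to the boundary terms are slightly more careful than the paper's ``similar to part (b)''.
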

\begin{proof}
\begin{enumerate}[(a)]
\item The proof of this part is similar to Theorem 1.3 of \cite{bartnik1986mass}. 
\item The proof of part (b) is as following. We know for any symmetric matrices $\bar{\lambda}$ we have
\begin{equation}
\abs{\bar{\lambda}'}^2=\bar{\lambda}'^2_{11}+\bar{\lambda}'^2_{22}+2\bar{\lambda}'^2_{12}
\end{equation}
Let $\Delta_3$ be Laplace operator respect to $\delta_3$ on $\mathbb{R}^3$. 
\begin{equation}
\Delta_3(\ln \rho)=0
\end{equation}
Then for each one of these functions, $\bar{\lambda}'_{ij}$ and by integrating over $\Omega_{\rho_0}$ and integrating by parts,
\begin{equation}
\int_{\Omega_{\rho_0}}\nabla\left(\rho^{-2\delta}\bar{\lambda}'^2_{ij}\right)\nabla\left(\ln\rho\right)\td\Sigma=0
\end{equation} 
Now if we expand the derivatives in the integrand and use H\"older inequality  we have
\begin{equation}
\abs{\delta}^{-2}\int_{\Omega_{\rho_0}}\abs{\nabla\bar{\lambda'}_{ij}}^2\rho^{-2\delta}\td\Sigma\geq
\int_{\Omega_{\rho_0}}\abs{\bar{\lambda}'_{ij}}^2\rho^{-2\delta-2}\td\Sigma
\end{equation}
Then we have the following inequality
\begin{equation}
\abs{\delta}^{-2}\int_{\Omega_{\rho_0}}\abs{\nabla\bar{\lambda}'}^2\rho^{-2\delta}\td\Sigma\geq
\int_{\Omega_{\rho_0}}\abs{\bar{\lambda}'}^2\rho^{-2\delta-2}\td\Sigma
\end{equation}
\item Proof is similar to part (b).
\end{enumerate}
\end{proof}

\begin{lemma}\label{uniformlemma} Let $\varphi\in B$ and $0<t<1$, then
\begin{enumerate}[(a)]
\item The function $\mathcal{E}_{\varphi}(t)$  is $C^2$ in the $t$ variable.
\item For every $\epsilon>0$ there exist $\eta(\epsilon)$ such that for
$\norm{\varphi}_{B}< \eta(\epsilon)$ we have
\begin{equation}\label{uc}
\abs{\mathcal{E}''_{\varphi}(t)-\mathcal{E}''_{\varphi}(0)}\leq\epsilon\norm{\varphi}^2_{\mathcal{H}}
\end{equation}
\end{enumerate}
\end{lemma}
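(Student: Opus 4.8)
The plan is to establish both parts by differentiation under the integral sign, with every estimate ultimately driven by the extreme-class bounds of Definition \ref{Def1} together with the decay encoded in $\varphi\in B$. The starting point is the observation that, since $v(t)=v_0+t\bar v$, $\lambda'(t)=\lambda'_0+t\bar\lambda'$ and $Y(t)=Y_0+t\bar Y$ are affine in $t$, the integrand of $\mathcal{E}_\varphi(t)$ depends on $t$ only through $e^{-6v}=e^{-6v_0}e^{-6t\bar v}$, through $\nabla Y=\nabla Y_0+t\nabla\bar Y$, and through $\text{adj}(\lambda')$. The key structural simplification is that for $2\times2$ matrices $\text{adj}$ is \emph{linear}, so $\text{adj}(\lambda')=\text{adj}(\lambda'_0)+t\,\text{adj}(\bar\lambda')$; thus each $\partial_t$ acting on the integrand either brings down a bounded factor $-6\bar v$ from the exponential or replaces a background gradient/matrix by its perturbation, and the integrand is real-analytic in $t$ for each fixed point of $\mathbb{R}^3$.

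For part (a), I would produce, uniformly in $t\in(0,1)$, an $L^1(\mathbb{R}^3,\td\Sigma)$ dominating function for $\partial_t$ and $\partial_t^2$ of the integrand. Near $\rho=0$ the only genuinely singular weights $\rho^{-2},\rho^{-4}$ multiply $\nabla\bar\lambda'$ or $\nabla\bar Y$, which vanish on $K_{\rho_0}$, so those pieces are supported in $\Omega_{\rho_0}$ where $\rho\ge\rho_0>0$ and the integrand is regular; the purely background pieces are controlled by bound~1 of Definition \ref{Def1}, which using $\text{adj}(\lambda'_0)=\rho^2\lambda'^{-1}_0$ and $X_0=e^{4v_0}\rho^2$ gives $\frac{e^{-6v_0}}{\rho^4}\nabla Y_0^t\,\text{adj}(\lambda'_0)\nabla Y_0=\frac{\nabla Y_0^t\lambda_0^{-1}\nabla Y_0}{X_0}\le Cr^{-4}$. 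At $r\to\infty$ the decay $\beta<-1$ of $\varphi\in B$ and the asymptotics \eqref{Asympvf}--\eqref{End2} render everything integrable. Dominated convergence then gives that $\mathcal{E}_\varphi\in C^2$ and that $\mathcal{E}''_\varphi$ is continuous in $t$.

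For part (b), I would compute $\mathcal{E}''_\varphi(t)-\mathcal{E}''_\varphi(0)$ directly from the displayed second-variation formula. The two $t$-independent terms $12(\nabla\bar v)^2$ and $-\det\nabla\bar\lambda'/\rho^2$ cancel, leaving the five terms carrying the prefactor $e^{-6v}/\rho^4$. Writing each such term as a smooth function $P(t)$ of $t$ (through $e^{-6v}$, $\nabla Y$ and $\text{adj}(\lambda')$) and using $P(t)-P(0)=\int_0^t P'(s)\,\td s$, every difference factorises as (a scalar controlled by $\norm{\varphi}_B$, via $|e^{-6t\bar v}-1|\le C\norm{\varphi}_B$ from the mean value theorem and via the perturbation factors $\bar v,\nabla\bar Y,\text{adj}(\bar\lambda')$ appearing in $P'$, each bounded pointwise by $\norm{\varphi}_B\sigma^\beta$) times a quadratic expression in $\varphi$. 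This should yield
\[
\abs{\mathcal{E}''_\varphi(t)-\mathcal{E}''_\varphi(0)}\le C(\norm{\varphi}_B)\,\norm{\varphi}^2_{\mathcal{H}},\qquad C(s)\to 0\ \text{as}\ s\to 0,
\]
uniformly in $t\in(0,1)$, after which choosing $\eta(\epsilon)$ so that $C(\eta)<\epsilon$ finishes the proof.

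The quadratic integrals are absorbed into $\norm{\varphi}^2_{\mathcal{H}}=\norm{\bar v}^2_{\mathcal{H}_1}+\norm{\bar\lambda'}^2_{\mathcal{H}_2}+\norm{\bar Y}^2_{\mathcal{H}_3}$ using Definition \ref{Def1}: background coefficients multiplying $\bar v^2$ reduce as above to $\le Cr^{-4}$, so $\int\bar v^2(\cdot)\,\td\Sigma\le C\norm{\bar v}^2_{\mathcal{H}_1}$; terms quadratic in $\nabla\bar Y$ recombine through $\text{adj}(\lambda'_0)=\rho^2\lambda'^{-1}_0$ into $e^{-6v_0}\rho^{-2}\nabla\bar Y^t\lambda'^{-1}_0\nabla\bar Y$, matching the $\lambda'_0$-weighted norm of $\mathcal{H}_3$, while bound~2 ($C_1\rho I\le\lambda_0\le C_2\rho I$, $C_3\rho^{-1}I\le\lambda_0^{-1}\le C_4\rho^{-1}I$, with $\lambda'_0=e^{-2v_0}\lambda_0$) lets me pass between Euclidean and $\lambda_0$-weighted norms to supply exactly the powers of $\rho$ needed for $\mathcal{H}_2,\mathcal{H}_3$; mixed cross terms $\bar v\bar\lambda'$, $\bar v\,\text{adj}(\bar\lambda')$ are split by Cauchy--Schwarz, and bounds~3--4 handle residual $\nabla v_0$ and $\nabla\lambda'_0$ factors. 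The main obstacle I anticipate is precisely this weight-matching bookkeeping in the cross and cubic terms: ensuring that after invoking Definition \ref{Def1} every integrand carries \emph{exactly} a $\rho^{-2}$ or $\rho^{-4}$ (resp.\ $r^{-2},r^{-4}$) weight, keeping track of the region $\rho\le r^2$ where the bounds hold versus its complement (where one exploits $\rho\ge\rho_0$ on $\Omega_{\rho_0}$, at the cost of $\rho_0$-dependent constants), and verifying that the exponential nonlinearity $e^{-6t\bar v}$ is controlled uniformly in $t$ rather than only pointwise.
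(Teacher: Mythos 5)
Your proposal is correct and follows essentially the same route as the paper: both parts rest on the affine dependence of $(v,\lambda',Y)$ on $t$, the compact support of $\bar{\lambda}'$ and $\bar{Y}$ in $\Omega_{\rho_0}$, the linearity of the $2\times 2$ adjugate, and the pointwise bounds of Definition~\ref{Def1}, with the difference $\mathcal{E}''_{\varphi}(t)-\mathcal{E}''_{\varphi}(0)$ split term by term into a factor small in $\norm{\varphi}_{B}$ (via $\abs{\bar{v}}\leq\eta$ and $\abs{e^{-6t\bar{v}}-1}\leq e^{6\eta}-1$) times a quadratic expression absorbed into $\norm{\varphi}^2_{\mathcal{H}}$ by Cauchy--Schwarz. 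The only cosmetic difference is in part (a), where the paper establishes $C^2$ regularity by exhibiting a bounded third derivative $\mathcal{E}'''_{\varphi}(t)$ rather than by dominating the first two derivatives and invoking dominated convergence.
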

\begin{proof}
\begin{enumerate}[(a)]
\item
To show $\mathcal{E}_{\varphi}(t)$  is $C^2$ it is enough to to show the third derivatives exists for all $t$. First we have
{\small\begin{eqnarray}
\mathcal{E}'''_{\varphi}(t)&=&\frac{1}{8}\int_{\mathbb{R}^3}\frac{e^{-6 v}}{\rho^4}\Bigg(3\nabla\bar{Y}^t\text{adj}(\bar{\lambda'})\nabla{{Y}}
-42\bar{v}\nabla\bar{Y}^t\text{adj}(\bar{\lambda'})\nabla{\bar{Y}}
-12\bar{v}\nabla\bar{Y}^t\text{adj}(\lambda')\nabla{\bar{Y}}\nonumber\\
&+&108\bar{v}^2\nabla{Y}^t\text{adj}(\bar{\lambda'})\nabla{ Y}
+144\bar{v}^2\nabla{Y}^t\text{adj}(\lambda')\nabla{\bar{Y}}
-216\bar{v}^3\nabla{Y}^t\text{adj}(\lambda')\nabla{ Y}
\Bigg)\,\td\Sigma\nonumber
\end{eqnarray}}
Note $\nabla\bar{Y}^i$  and $\bar{\lambda}'$ have compact support in $\Omega_{\rho_0}$. Therefore, by parts 1 and 2 of Definition \ref{Def1} and relation $\text{adj}\bar{\lambda}'=-\frac{1}{\rho^2}\text{adj}{\lambda}'_0\bar{\lambda}'\text{adj}\lambda'_0$ and $\det\bar{\lambda}'=0$ it is straightforward but tedious to show that all terms are bounded by the norm $B$. The only term with different domain is  
\begin{equation}
-\frac{216\bar{v}^3}{X_0}\nabla{Y}_0^t\lambda^{-1}_0\nabla{Y}_0
\end{equation}
which is bounded on $\mathbb{R}^3$ by part 1 of  Definition \ref{Def1}. Then $\mathcal{E}_{\varphi}(t)$ is $C^2$.
\item First by integrand of $\mathcal{E}''_{\varphi}(t)$ we have
\begin{eqnarray}
\mathcal{E}''_{\varphi}(t)-\mathcal{E}''_{\varphi}(0)=\int_{\mathbb{R}^3}\left(A_1|_{0}^t+...+A_6|_{0}^t\right)\,\td\Sigma
\end{eqnarray}
where 
\begin{eqnarray}
A_1&=&18\frac{e^{-6v}\bar{v}^2}{\rho^4}\nabla Y_0^t\text{adj}\lambda'_0\nabla Y_0\qquad A_2=\frac{e^{-6v}}{\rho^4}(18\bar{v}^2t-6\bar{v})\nabla {Y}^t_0\text{adj}\bar{\lambda}'\nabla Y_0\nonumber\\
A_3&=&\frac{e^{-6v}}{\rho^4}(36\bar{v}^2t-12\bar{v})\nabla \bar{Y}^t\text{adj}{\lambda}'_0\nabla Y_0\qquad
A_4=\frac{e^{-6v}}{\rho^4}(18\bar{v}^2t^2-12\bar{v}t+1)\nabla \bar{Y}^t\text{adj}\lambda'_0\nabla\bar{Y}\nonumber\\
A_5&=&\frac{e^{-6v}}{\rho^4}(36\bar{v}^2t^2-24\bar{v}t^2+2)\nabla \bar{Y}^t\text{adj}\bar{\lambda}'\nabla{Y}_0\qquad
A_6=\frac{e^{-6v}}{\rho^4}(18\bar{v}^2t^3-18\bar{v}t^2+3t)\nabla \bar{Y}^t\text{adj}\bar{\lambda}'\nabla\bar{Y}\nonumber
\end{eqnarray} 
All of these terms satisfy \eqref{uc} by similar steps as in \cite{dain2006proof}. We will explicitly give the proof for $A_1,A_2,A_3$ as the arguments are similar but tedious.  First we have
\begin{gather}\label{barv}
\abs{\bar{v}}\leq \sigma^{\beta}\norm{\bar{v}}_{C^1_{\beta}(\mathbb{R}^3)}\leq \norm{\bar{v}}_{C^1_{\beta}(\mathbb{R}^3)}\leq \norm{\varphi}_{B}\leq \eta
\end{gather}
By part (1) of Definition \ref{Def1}  we have 
\begin{eqnarray}
\int_{\mathbb{R}^3}A_1|_{0}^t\td\Sigma_0&=&\int_{\mathbb{R}^3}18\bar{v}^2\frac{\nabla Y_0^t\lambda_0^{-1}\nabla Y_0}{X_0}\left[e^{-6 t\bar{v}}-1\right]\td\Sigma\nonumber\\
&\leq&18 C\left[e^{6\eta}-1\right]\int_{\mathbb{R}^3}\bar{v}^2r^{-4}\td\Sigma \nonumber\\
&\leq& 18 C\left[e^{6\eta}-1\right]\norm{\bar{v}}^2_{\mathcal{H}_1}
\leq 18 C\left[e^{6\eta}-1\right]\norm{\varphi}^2_{\mathcal{H}}
\end{eqnarray}
Now First we write $A_2=B_1+B_2$ where
\begin{gather}
B_1=\frac{e^{-6v}}{\rho^4}18\bar{v}^2t\nabla {Y}^t_0\text{adj}\bar{\lambda}'\nabla Y_0\qquad B_2=-6\frac{e^{-6v_0}}{\rho^4}\bar{v}\nabla {Y}^t_0\text{adj}\bar{\lambda}'\nabla Y_0\left[e^{-6 t\bar{v}}-1\right]
\end{gather}
We will prove it for $B_1$ and $B_2$ is similar. We have
\begin{eqnarray}
\int_{\mathbb{R}^3}B_1\td\Sigma &=&-\int_{\Omega_{\rho_0}}\frac{e^{-6v}}{\rho^6}18\bar{v}^2t\nabla {Y}^t_0\text{adj}\lambda'_0\bar{\lambda}'\text{adj}\lambda'_0\nabla Y_0\td\Sigma\nonumber\\
&\leq&18e^{6\eta}\eta\int_{\Omega_{\rho_0}}\frac{e^{-6v_0}}{\rho^4}\abs{\bar{\lambda}'}\bar{v}\nabla {Y}^t_0(\text{adj}{\lambda}'_0)^2\nabla Y_0\td\Sigma\nonumber\\
&\leq&18C\eta e^{6\eta}\int_{\Omega_{\rho_0}}\abs{\bar{\lambda}'}\bar{v}\rho^{-1}r^{-2}\td\Sigma \nonumber\\
&\leq&18C\eta e^{6\eta}\norm{\bar{v}}_{\mathcal{H}_1}\norm{\bar{\lambda'}}_{\mathcal{H}_2}\leq 18C\eta e^{6\eta}\norm{\varphi}^2_{\mathcal{H}}
\end{eqnarray}
We used the identity $\text{adj}\bar{\lambda}'=-\frac{1}{\rho^2}\text{adj}{\lambda}'_0\bar{\lambda}'\text{adj}\lambda'_0$ in the first line.  The first inequality arise from \eqref{barv} and the matrix inequality $u^t A u \leq \abs{A} u^t u$ for any $2 \times 2$ matrix $A$. The second inequality is a consequence of parts (1) and (2) of Definition \ref{Def1}.  Finally, the third inequality follows from H\"older's inequality.

The term $A_3$ can be expressed as $A_3=B_3+B_4$ where
\begin{gather}
B_3=36\frac{e^{-6v}}{\rho^4}\bar{v}^2t\nabla \bar{Y}^t\text{adj}{\lambda}'_0\nabla Y_0\qquad B_4=-12\frac{e^{-6v_0}}{\rho^4}\bar{v}\nabla \bar{Y}^t\text{adj}{\lambda}'_0\nabla Y_0\left[e^{-6 t\bar{v}}-1\right]
\end{gather}
Then the bound of $B_3$ is 
\begin{eqnarray}
\int_{\mathbb{R}^3}B_3\td\Sigma
&\leq&36\eta e^{6\eta}\int_{\Omega_{\rho_0}}\frac{1}{X_0}\bar{v}\nabla \bar{Y}^t\lambda'^{-1}_0\nabla Y_0\td\Sigma \nonumber\\
&\leq&36\eta e^{6\eta}\int_{\Omega_{\rho_0}}\frac{\bar{v}}{X_0}\left(\nabla \bar{Y}^t\lambda'^{-1}_0\nabla \bar{Y}\right)^{1/2} \left(\nabla Y_0^t \lambda_0^{-1} \nabla Y_0 \right)^{1/2}\td\Sigma \nonumber \\
&\leq&36C\eta e^{6\eta}\left(\int_{\Omega_{\rho_0}}\rho^{-2}\nabla \bar{Y}^t\lambda'^{-1}_0\nabla \bar{Y}\td\Sigma \right)^{1/2}\left(\int_{\Omega_{\rho_0}}\bar{v}^2r^{-4}\td\Sigma \right)^{1/2}\nonumber\\
&\leq&36C\eta e^{6\eta}\norm{\bar{v}}_{\mathcal{H}_1}\norm{\bar{Y}}_{\mathcal{H}_3}
\leq 36C\eta e^{6\eta}\norm{\varphi}^2_{\mathcal{H}}  
\end{eqnarray} The first inequality uses \eqref{barv}.  We know $\lambda_0^{-1}$ is a positive definite symmetric matrix. Thus it has a square root matrix $\lambda_0^{-1/2}$, that is $\lambda_0^{-1}=\left(\lambda_0^{-1/2}\right)^2$.  Then the integrand in the first line is equal to $X_0^{-1} \bar{v} u^t w$ where $u^t = \nabla \bar{Y}^t \lambda_0^{-1/2}$ and $w = \lambda_0^{-1/2}\nabla Y_0$.  Since $u^t w \leq (u^t u)^{1/2} (w^t w)^{1/2}$ we have the second inequality.  The third inequality follows from H\"older's inequality and parts (1) and  (2) of Definition 1.  The fourth inequality is by the definition of norm.
$B_4$ is exactly similar to $B_3$.
\end{enumerate}
\end{proof}
\section{Local minima of $\mathcal{E}_\varphi(t)$ }\label{Carter}
In this section we  first derive a five-dimensional version of Carter's identity and show its relation with the second variation $\mathcal{E}''_{\varphi}(t)$. Assume we are in a five dimensional vacuum spacetime with isometry group $\mathbb{R}\times U(1)^2$.  The field equations can be expressed simply as the conservation of a current (see \cite{figueras2010uniqueness} for details). 
 \begin{equation}\label{EOM}
 \nabla\cdot J=\nabla\cdot\left(\rho\,\Phi^{-1}\nabla\Phi\right)=0
 \end{equation}
 where
 \begin{equation} \Phi \equiv
 \Phi(X,Y,\lambda)=\begin{pmatrix}
   \frac{1}{X} & -\frac{Y^t}{X} \\
  -\frac{Y}{X} & \lambda+\frac{YY^t}{X}
  \end{pmatrix}
 \end{equation}
and $\det\Phi=1$, $\lambda$ is a positive definite $2\times 2$ symmetric matrix with $\det\lambda=X$ and $Y$ is a column vector. One can derive the Mazur identity (for a detailed discussion see \cite{carter1985bunting}) for two matrices $\Phi_{[1]}$ and $\Phi_{[2]}$ (not necessarily solutions) with corresponding currents $J_{[1]}, J_{[2]}$ 
\begin{equation} \label{Mazur}
\Delta \Psi-\text{Tr}\left(\Phi_{[2]}\left(\nabla\cdot\mathring{J} \right)\Phi_{[1]}^{-1}\right)=\frac{1}{\rho^2}\text{Tr}\left(\mathring{J}^t\Phi_{[2]}\mathring{J}\Phi_{[1]}^{-1}\right)
\end{equation}
where $\Delta$ is Laplace operator with respect to flat metric $\delta_3$ and
\begin{eqnarray}
\Psi&=&\text{Tr}\left(\Phi_{[2]}\Phi_{[1]}^{-1}-I\right)\qquad \mathring{J}=J_{[2]}-J_{[1]}
\end{eqnarray}   Note that this identity holds quite generally for any field theory which can be derived from a positive definite action with Lagrangian of the form $L \sim \Tr (\Phi^{-1} \td \Phi)^2$. The linearized version of this identity in four dimensions  was originally found by  Carter  \cite{carter1971axisymmetric} and plays an important role in geometric inequalities in 3+1 dimensional spacetime\cite{dain2006proof,dain2008proof,dain2011area,dain2011areacharge}.  We will now derive a generalization of this identity for five dimensions.  Assume we have  $\Phi_{[1]}(X,Y,\lambda)$ and $\Phi_{[2]}(X_2,Y_2,\lambda_2)$ related by
\begin{gather}\label{fieldeqns}
\begin{aligned}X_2&=X+s\dot{X}\qquad &&Y_2 =Y+s \dot{Y}  \qquad \lambda_2=\lambda+s\dot{\lambda} \\
G_{\lambda_2}&=G_{\lambda}+s \dot{G}_{\lambda},\qquad  &&G_{X_2}=G_X+s \dot{G}_X 
\end{aligned}
\end{gather}
The overdot $\dot{}$ represents the linear order of expansion or first variation with respect to $s$ (when taking variations of the products of several terms, we use the notation $\delta$ instead of dot for convenience of notation). Then \eqref{Mazur} implies, to lowest order in $s$, that
\begin{eqnarray}
&&\Delta\left(\frac{\dot{Y}^t\lambda^{-1}\dot{Y}}{X}+\frac{\dot{X}^2}{X^2}\right)\nonumber\\
&+&\frac{\dot{Y}^t\lambda^{-1}\dot{Y}}{X}G_X-\frac{\dot{X}}{X}\dot{G}_X-2\frac{\dot{X}}{X}G_Y^t\dot{Y}-2\dot{Y}^t\lambda^{-1}\dot{\lambda}G_Y+\frac{\dot{Y}^t\lambda^{-1}G^t_{\lambda}\dot{Y}}{X}-\text{Tr}\left(\lambda^{-1}\dot{\lambda}\dot{G}^t_{\lambda}\right)-2\dot{G}_Y^t\dot{Y}\nonumber\\
&=&\left(\nabla\left(\frac{\dot{X}}{X}\right)+\frac{\dot{Y}^t\lambda^{-1}\nabla Y}{X}\right)^2+X\left(\dot{U}_2^t\lambda\dot{U}_2+\nabla U_1^t\lambda\nabla U_1\right)+\text{Tr}\bigg[\left(\nabla\left(\dot{\lambda}\lambda^{-1}\right)+\frac{\nabla Y\dot{Y}^t\lambda^{-1}}{X}\right)^2\bigg]\nonumber\\
\end{eqnarray}
where
\begin{equation}
U_1\equiv\frac{\lambda^{-1}\dot{Y}}{X}\qquad U_2\equiv\frac{\lambda^{-1}\nabla{Y}}{X}
\end{equation} This is the five-dimensional extension of Carter's identity which appeared in \cite{carter1971axisymmetric} .
Now if we consider our parametrization of data with relations \eqref{relation1} and \eqref{relation2} we have
\begin{equation}
\dot{X}=4\bar{v}X,\qquad \dot{\lambda}=\bar{\lambda}=2\bar{v}\lambda+\lambda\lambda'^{-1}\bar{\lambda}',\qquad \dot{Y}=\bar{Y}
\end{equation}
Thus
\begin{equation}
\lambda^{-1}\dot{\lambda}=2\bar{v}I+\lambda'^{-1}\bar{\lambda}'\label{lambdaandprime}
\end{equation}
since  $\text{Tr}\left(\lambda'^{-1}\bar{\lambda}'\right)= \delta \det \lambda' / \det \lambda' = 0$ we have $\text{Tr}\left(\lambda^{-1}\bar{\lambda}\right)=4\bar{v}$. Then the following identity holds for arbitrary $v$, $\bar{v}$,$Y$, $\bar{Y}$,$\lambda$,$\bar{\lambda}$  will be 
\begin{eqnarray}\label{Carteridentity}
&&\Delta\left(\frac{\bar{Y}^t\lambda^{-1}\bar{Y}}{X}+16\bar{v}^2\right)\nonumber\\
&+&\frac{\bar{Y}^t\lambda^{-1}\bar{Y}}{X}G_X-4\bar{v}\dot{G}_X-8\bar{v}G_Y^t\bar{Y}-2\bar{Y}^t\lambda^{-1}\bar{\lambda}G_Y+\frac{\bar{Y}^t\lambda^{-1}G^t_{\lambda}\bar{Y}}{X}-\text{Tr}\left(\lambda^{-1}\bar{\lambda}\dot{G}^t_{\lambda}\right)-2\dot{G}_Y^t\bar{Y}\nonumber\\
&=&F(t)
\end{eqnarray}
where $G_X$, $G_Y$, and $G_{\lambda}$ defined in \eqref{fieldeqns} and 
\begin{eqnarray}
F(t)&=&\left(4\nabla\bar{v}+\frac{\bar{Y}^t\lambda^{-1}\nabla Y}{X}\right)^2+X\left(\dot{U}_2^t\lambda\dot{U}_2+\nabla U_1^t\lambda\nabla U_1\right)+\text{Tr}\bigg[\left(\nabla\left(\bar{\lambda}\lambda^{-1}\right)+\frac{\nabla Y\bar{Y}^t\lambda^{-1}}{X}\right)^2\bigg]\nonumber\\
\dot{G}_X&=&4\Delta_3\bar{v}+\frac{e^{-6v}}{\rho^4}\left\{2\nabla{\bar{Y}}^t\text{adj}\lambda'\nabla Y+\nabla{{Y}}^t\text{adj}\bar{\lambda'}\nabla Y-6\bar{v}\nabla{{Y}}^t\text{adj}\lambda'\nabla Y\right\}\nonumber\\
\dot{G}_{\lambda}&=&2\Delta_3\bar{v} I+\nabla\cdot\delta\left(\lambda ^{'-1}\nabla\lambda'\right)+\frac{e^{-6v}}{\rho^4}\left\{2\text{adj}\lambda'\nabla Y\cdot\nabla \bar{Y}^t+\text{adj}\bar{\lambda}'\nabla Y\cdot\nabla {Y}^t-6\bar{v}\text{adj}\lambda'\nabla Y\cdot\nabla{Y}^t\right\}\nonumber\\
\dot{G}_{Y}&=&\nabla\cdot \left(\frac{e^{-6v}}{\rho^4}\left\{\text{adj}\lambda'\nabla\bar{Y}+\text{adj}\bar{\lambda}'\nabla{Y}-6\bar{v}\text{adj}\lambda'\nabla{Y}\right\}\right)
\end{eqnarray} The identity \eqref{Carteridentity} can be verified directly.
Assume $\varphi\in B$ then after a tedious calculation involving  repeated integration by parts we have the remarkable relation
\begin{equation}\label{secondvariation}
\int_{\mathbb{R}^3}\left(-4\bar{v}\dot{G}_X-\text{Tr}\left(\lambda^{-1}\bar{\lambda}\dot{G}_{\lambda}\right)-2\dot{G}_Y^t\bar{Y}\right)\td\Sigma=16\mathcal{E}''_{\varphi}(t)
\end{equation} 
Thus if $t=0$, the field equations $G_X(0)=G_{\lambda}(0)=G_Y(0)=0$ hold and  we have from \eqref{Carteridentity} (the integral over the divergence term vanishes by our boundary conditions) 
\begin{equation}
\mathcal{E}''_{\varphi}(0)=\frac{1}{16}\int_{\mathbb{R}^3}F(0)\td\Sigma\geq 0
\end{equation}
where
\begin{eqnarray}
F(0)&=&\left(4\nabla\bar{v}+\frac{\bar{Y}^t\lambda^{-1}_0\nabla Y_0}{X_0}\right)^2+X_0\left(\dot{U}_2^t\lambda\dot{U}_2+\nabla U_1^t\lambda\nabla U_1\right)+\text{Tr}\bigg[\left(\nabla\left(\bar{\lambda}\lambda^{-1}_0\right)+\frac{\nabla Y_0\bar{Y}^t\lambda^{-1}_0}{X_0}\right)^2\bigg]\nonumber\\
&\geq& X_0\nabla U_1^t\lambda_0\nabla U_1\label{F1}
\end{eqnarray}
Now if $\mathcal{E}''_{\varphi}(0)=0$, then $F(0)=0$. Therefore, by inequality \eqref{F1} we have $\nabla U_1=0$. Also, since $\varphi\in B$, we have $\bar{Y}=0$. Therefore, by $F=0$ and $\bar{Y}=0$ we have $\bar{v}=0$ and $\bar{\lambda}=0$.  This is, however, not sufficient to prove that the extreme data $u_0$ is  a \emph{strict} local minimum. For this one needs a stronger positivity result on $\mathcal{E}''_{\varphi}(0)$  (see for example, Theorem 40.B of \cite{zeidler1989nonlinear}) which we now demonstrate.

Firstly, we prove a coercive condition required for $u_0$ to be a local minimum. We note the  identity (this arises in the proof of \eqref{secondvariation})
\begin{eqnarray}
\int_{\Omega_{\rho_0}}2\rho^{-2}\text{Tr}\left(\lambda'^{-1}\nabla \lambda'\text{adj}\bar{\lambda'}\nabla \bar{\lambda'}\right)\,\td\Sigma=-\int_{\Omega_{\rho_0}}\left(\text{Tr}\left[\bar{\lambda}'\nabla\left(\lambda'^{-1}\right)\right]\right)^2\,\td\Sigma\label{Crazyidentity}
\end{eqnarray}
\begin{lemma}\label{coercive}
There exist $\mu>0$ such that for all $\varphi\in B$ we have
\begin{equation}
\mathcal{E}''_{\varphi}(0)\geq\mu\norm{\varphi}^2_{\mathcal{H}}\label{Coercivein}
\end{equation}
\end{lemma}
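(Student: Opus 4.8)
The plan is to exploit the sum-of-squares form $\mathcal E''_\varphi(0)=\tfrac1{16}\int_{\mathbb R^3}F(0)\,\td\Sigma$ produced by the five-dimensional Carter identity. Since $X_0\dot U_2^t\lambda_0\dot U_2\ge 0$, I would discard it and work with $F(0)\ge T_1+X_0\nabla U_1^t\lambda_0\nabla U_1+T_3$, where $T_1=\big(4\nabla\bar v+X_0^{-1}\bar Y^t\lambda_0^{-1}\nabla Y_0\big)^2$ and $T_3=\Tr\big[\big(\nabla(\bar\lambda\lambda_0^{-1})+X_0^{-1}\nabla Y_0\,\bar Y^t\lambda_0^{-1}\big)^2\big]$. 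The point is that each square contains exactly one principal gradient — of $\bar Y$ (from the middle term) and of $\bar v,\bar\lambda'$ (from $T_1$ and $T_3$) — together with lower-order pieces proportional either to $\bar Y$ itself or to the background data $\nabla v_0,\nabla\ln X_0,\nabla\lambda_0\lambda_0^{-1}$. I would isolate the principal gradients by the elementary inequality $(a+b)^2\ge\tfrac12 a^2-b^2$, retaining a fixed fraction of each and sending the remainder to a controllable error.

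For the middle term, expanding $\nabla U_1=X_0^{-1}\lambda_0^{-1}\nabla\bar Y+(\text{terms}\propto\bar Y)$ gives leading contribution $X_0^{-1}\nabla\bar Y^t\lambda_0^{-1}\nabla\bar Y=e^{-6v_0}\rho^{-2}\abs{\nabla\bar Y}^2$, which by parts (2)--(3) of Definition~\ref{Def1} reproduces the gradient part of $\norm{\bar Y}^2_{\mathcal H_3}$ up to a fixed constant. Using \eqref{lambdaandprime}, the matrix $\bar\lambda\lambda_0^{-1}$ splits into a pure-trace part $2\bar v I$ and a traceless part built from $\bar\lambda'$; hence $T_3$ supplies both a $\nabla\bar v$ contribution (which combines with that of $T_1$ to give the full $\bar v$-gradient) and the $\nabla\bar\lambda'$-gradient. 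For the latter I would write $\nabla(\bar\lambda\lambda_0^{-1})=\nabla\bar\lambda\,\lambda_0^{-1}+\bar\lambda\,\nabla\lambda_0^{-1}$, invoke the identity \eqref{Crazyidentity} to turn the mixed term into a manifestly nonpositive square, and use part (4) ($\abs{\nabla\lambda_0\lambda_0^{-1}}^2\le C\rho^{-2}$) to dominate the $\bar\lambda\,\nabla\lambda_0^{-1}$ remainder by the zeroth-order norm of $\bar\lambda'$.

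Next I would dispose of the $b^2$-errors and cross terms. A typical one, $\big(X_0^{-1}\bar Y^t\lambda_0^{-1}\nabla Y_0\big)^2$, is bounded by Cauchy--Schwarz in the $\lambda_0^{-1}$-inner product by $\big(X_0^{-1}\bar Y^t\lambda_0^{-1}\bar Y\big)\big(X_0^{-1}\nabla Y_0^t\lambda_0^{-1}\nabla Y_0\big)\le C\,r^{-4}e^{-6v_0}\rho^{-2}\abs{\bar Y}^2$ by part (1); since $\rho^2\le r^4/4$ and $e^{-6v_0}$ is bounded on $\Omega_{\rho_0}$, this is a multiple of the zeroth-order density $\abs{\bar Y}^2\rho^{-4}$. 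The $\bar Y$-proportional corrections inside $\nabla U_1$, and the background couplings of $\bar v$ in $T_1,T_3$, are treated identically using parts (1) and (4). At this stage the lower bound for $\int F(0)$ is a quadratic form in the three principal gradient integrals and the three zeroth-order integrals; I would then apply the Poincar\'e inequalities of Lemma~\ref{Poincare} both to absorb each zeroth-order error into a small multiple of the matching principal gradient and to manufacture the zeroth-order pieces of $\norm{\varphi}^2_{\mathcal H}$. Choosing the Young parameters small enough then yields $\mathcal E''_\varphi(0)\ge\mu\norm{\varphi}^2_{\mathcal H}$ with $\mu>0$.

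The hard part will be the simultaneous weight-matching and constant bookkeeping. Because the principal gradient terms carry background factors ($e^{-6v_0}$, $\lambda_0^{\pm1}$, $X_0$), one must check that \emph{precisely} the inequalities of Definition~\ref{Def1} — rather than mere finiteness — force them to dominate the fixed weights $r^{-2},\rho^{-2}$ of $\mathcal H$; this is most delicate for $\bar v$, whose gradient enters the functional with weight $r^{0}$ while $\mathcal H_1$ carries $r^{-2}$, so the comparison must be organised through the Poincar\'e inequalities with the appropriate exponents (and, if needed, a near/far decomposition in $r$). Moreover, since $T_1$ couples $\bar v$ to $\bar Y$ and $T_3$ couples $\bar\lambda'$ to $\bar Y$, the resulting quadratic form is genuinely non-diagonal, and establishing its positive-definiteness reduces to showing that, after Cauchy--Schwarz and Poincar\'e, every cross contribution is \emph{strictly} smaller than the product of the corresponding diagonal ones. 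This is exactly the step where the quantitative strength of the Definition~\ref{Def1} bounds, and the confinement of $\bar Y,\bar\lambda'$ to $\Omega_{\rho_0}$, are indispensable.
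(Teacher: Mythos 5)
Your architecture is the paper's: the same lower bound $F(0)\ge X_0\nabla U_1^t\lambda_0\nabla U_1$ for the $\bar Y$ terms, the same splitting of each square to isolate the principal gradients of $\bar v$, $\bar\lambda'$, $\bar Y$, the same use of Cauchy--Schwarz in the $\lambda_0^{-1}$-inner product together with parts (1)--(4) of Definition~\ref{Def1} to reduce every error to the zeroth-order density $\rho^{-4}\bar Y^t\lambda_0^{-1}\bar Y$, the same appeal to the identity \eqref{Crazyidentity} to kill the mixed $\nabla\bar\lambda'\,\nabla\lambda_0'$ term, and the same Poincar\'e inequalities and Bartnik-type weighted inequality to recover the zeroth-order pieces of $\norm{\varphi}^2_{\mathcal H}$.

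The one place where your plan as written does not close is the final absorption step. You propose to ``absorb each zeroth-order error into a small multiple of the matching principal gradient'' by tuning Young parameters. This runs into the usual sign problem: writing $(a+b)^2\ge(1-\theta)a^2-(\theta^{-1}-1)b^2$, shrinking the discarded fraction $\theta$ of the principal square \emph{inflates} the coefficient of the error $b^2$, while Poincar\'e only gives $\int\rho^{-4}\abs{\bar Y}^2\le c^{-1}\int\rho^{-2}\abs{\nabla\bar Y}^2$ with a \emph{fixed} constant. Since nothing in Definition~\ref{Def1} makes the constants $C,C'$ small relative to $C_1,\dots,C_4$ and the Poincar\'e constants, you cannot guarantee that the cross contributions are strictly dominated by the diagonal ones, which is exactly the condition you flag as ``the hard part.'' The paper avoids needing any such smallness by a hierarchical argument (phrased as a contradiction on a normalized sequence with $\norm{\varphi_n}_{\mathcal H}=1$): the middle term of $F(0)$ \emph{alone}, via Definition~\ref{Def1}(2),(3) and Lemma~\ref{Poincare}(c), bounds $\int_{\Omega_{\rho_0}}\rho^{-4}\bar Y^t\lambda_0^{-1}\bar Y\,\td\Sigma$ above by a fixed (possibly large) multiple of $a(\varphi,\varphi)$ itself; every subsequent error term, being a multiple of that same quantity, is then controlled by a multiple of $a(\varphi,\varphi)$, so each inequality of the form $a+\mathrm{error}\ge(\text{norm piece})$ becomes $(1+C/c_0)\,a\ge(\text{norm piece})$ with no tuning required. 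Your proof becomes correct once you replace the ``small multiple'' absorption by this step of first estimating the zeroth-order $\bar Y$ integral by $a(\varphi,\varphi)$ and feeding it back; otherwise the ingredients and their roles coincide with the paper's.
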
 
\begin{proof}
Let $\varphi\in B$. Note that $\mathcal{E}''_{\varphi}(0)$ defines a bilinear form 
\begin{equation}
a(\varphi,\varphi)\equiv\mathcal{E}''_{\varphi}(0)=\int_{\mathbb{R}^3}F(0)\td\Sigma
\end{equation}
as function of $\varphi$. The inequality \eqref{Coercivein} is equivalent to the following variational problem 
\begin{equation}
\mu=\inf_{\varphi\in B,\norm{\varphi}^2_{\mathcal{H}}=1}a(\varphi,\varphi)
\end{equation}
Since $a(\varphi,\varphi)$ is positive definite, we have $\mu\geq 0$. Now we prove $\mu > 0$. Assume $\mu=0$, then there exists a sequence $\{\varphi_n\}$ such that
\begin{equation}
\norm{\varphi_n}^2_{\mathcal{H}}=1\qquad \text{for all $n$}
\end{equation}
and 
\begin{equation}
\lim_{n\to\infty}a(\varphi_n,\varphi_n)=0
\end{equation}
Then we have
\begin{eqnarray}
0&=&\lim_{n\to\infty}a(\varphi_n,\varphi_n)=\lim_{n\to\infty}\int_{\mathbb{R}^3}F(0)\td\Sigma\nonumber\\
&\geq&\lim_{n\to\infty} \int_{\Omega_{\rho_0}}X_0\nabla U_1^t\lambda_0\nabla U_1\td\Sigma 
\geq C_1\lim_{n\to\infty}\int_{\Omega_{\rho_0}}\rho^3 \nabla U_1^t\nabla U_1\td\Sigma\nonumber\\
& \geq& \frac{3C_1}{2}\lim_{n\to\infty}\int_{\Omega_{\rho_0}}\rho U_1^t U_1\td\Sigma
\geq\frac{3C_1C_3}{2C'}\lim_{n\to\infty}\int_{\Omega_{\rho_0}}\rho^{-4}\bar{Y}^t_n\lambda^{-1}_0\bar{Y}_n\td\Sigma
\end{eqnarray}
In first inequality we used \eqref{F1}. The second follows from part 2 and 3 of Definition \ref{Def1} . Third inequality follows from  Lemma \ref{Poincare}-(c). Fourth inequality follows from part 3 of Definition \ref{Def1}. Therefore, 
\begin{equation}
\lim_{n\to\infty}\int_{\Omega_{\rho_0}}\rho^{-4}\bar{Y}^t_n\lambda^{-1}_0\bar{Y}_n\td\Sigma=0\label{firstY}
\end{equation}
Next we establish some inequalities.  First rewrite $F(0)$ in the form 
\begin{eqnarray}
F(0)&=&\left(4\nabla\bar{v}_n+\frac{\bar{Y}^t_n\lambda^{-1}_0\nabla Y_0}{X_0}\right)^2+2A_1^t\lambda_0 A_1+2A_2^t\lambda_0 A_2+\text{Tr}\bigg[\left(\nabla\left(\bar{\lambda}_n\lambda^{-1}_0\right)+\frac{\nabla Y_0\bar{Y}^t_n\lambda^{-1}_0}{X_0}\right)^2\bigg]\nonumber
\end{eqnarray}
where 
\begin{equation}
A_1 = \frac{\sqrt{X_0}}{2} \left[B_I + B_{II} + B_{III}\right], \qquad A_2 = \frac{\sqrt{X_0}}{2} \left[B_{II} - B_I\right]
\end{equation} and
\begin{eqnarray}
B_I &=& {\frac{\lambda^{-1}_0\nabla\lambda_0\lambda^{-1}_0\bar{Y}_n}{X}+\frac{\nabla{X}_0}{X^2_0}\lambda^{-1}_0\bar{Y}_n}, \qquad B_{II} = \frac{\lambda^{-1}_0\bar{\lambda}_n\lambda^{-1}_0\nabla{Y}_0}{X_0}+\frac{\bar{X}}{X^2_0}\lambda^{-1}_0\nabla Y_0 \nonumber \\ B_{III} &=& 2\frac{\lambda^{-1}_0\nabla\bar{Y}}{X_0} \; .
\end{eqnarray}
Then we have the following inequality
\begin{equation}
a(\varphi_n,\varphi_n)+\int_{\Omega_{\rho_0}}2B_I^t\lambda_0B_I\,\td\Sigma\geq\int_{\Omega_{\rho_0}}\frac{1}{4}B_{III}^t\lambda_0B_{III}
\,\td\Sigma\label{ineq1}
\end{equation}
where $B_I$ can be written as
\begin{equation}
B_I=\frac{\lambda^{-1}_0}{\sqrt{X_0}}\left(\nabla\lambda_0\lambda^{-1}_0+\frac{\nabla{X_0}}{X_0}I_{2\times 2}\right)\bar{Y}_n=\frac{\lambda^{-1}_0}{\sqrt{X_0}}M\bar{Y}_n \;.
\end{equation}
By part 4 of Definition \ref{Def1} we have
\begin{equation}
\abs{M}^2\leq 2\abs{\nabla\lambda_0\lambda^{-1}_0}^2+2\abs{\nabla\ln X_0}^2\leq C\rho^{-2}
\end{equation}
and we have
\begin{eqnarray}
\int_{\Omega_{\rho_0}}2B_I^t\lambda_0B_I\,\td\Sigma&\leq&\int_{\Omega_{\rho_0}}\frac{2}{X_0}\abs{M}^2\bar{Y}^t_n\lambda^{-1}_0\bar{Y}_n\,\td\Sigma\leq  2C\int_{\Omega_{\rho_0}}\rho^{-4}\bar{Y}^t_n\lambda^{-1}_0\bar{Y}_n\,\td\Sigma\label{ineq2}
\end{eqnarray}
Then by inequities \eqref{ineq1} and \eqref{ineq2} we have
\begin{equation}
a(\varphi_n,\varphi_n)+4\int_{\Omega_{\rho_0}}\rho^{-4}\bar{Y}^t_n\lambda^{-1}_0\bar{Y}_n\,\td\Sigma\geq \frac{1}{4}\int_{\Omega_{\rho_0}}\rho^{-2}\nabla\bar{Y}^t_n\lambda^{-1}_0\nabla\bar{Y}_n\,\td\Sigma
\,\td\Sigma
\end{equation}
Now we take the limit of above equation and use the equation \eqref{firstY} to find
\begin{equation}
\lim_{n\to\infty}\int_{\Omega_{\rho_0}}\rho^{-2}\nabla\bar{Y}^t_n\lambda^{-1}_0\nabla\bar{Y}_n\td\Sigma=0
\end{equation}
Thus 
\begin{equation}\label{Ynvanish}
\lim_{n\to\infty}\norm{\bar{Y}_n}_{\mathcal{H}_3}=0
\end{equation}
Now we look at the first term in $F(0)$. Then
\begin{equation}
a(\varphi_n,\varphi_n)+\int_{\Omega_{\rho_0}}\left(\frac{\bar{Y}^t_n\lambda^{-1}_0\nabla Y_0}{X_0}\right)^2\,\td\Sigma\geq 8\int_{\Omega_{\rho_0}}\left(\nabla\bar{v}_n\right)^2\,\td\Sigma \label{ineq4}
\end{equation}
Since $\lambda_0$ is a positive definite symmetric metric it has unique square root $\lambda^{1/2}_0$. Now if we set $u=\lambda_0^{-1/2}\bar{Y}$ and $w=\lambda_0^{-1/2}\nabla Y_0$  we have
\begin{eqnarray}
\int_{\Omega_{\rho_0}}\left(\frac{\bar{Y}^t_n\lambda^{-1}_0\nabla Y_0}{X_0}\right)^2\,\td\Sigma &\leq&\int_{\Omega_{\rho_0}}\left(\frac{\bar{Y}^t_n\lambda^{-1}_0\bar{Y}_n}{X_0}\right)\left(\frac{\nabla{Y}^t_0\lambda^{-1}_0\nabla{Y}_0}{X_0}\right)\,\td\Sigma \nonumber\\
&\leq&C\int_{\Omega_{\rho_0}}\rho^{-2}\bar{Y}^t_n\lambda^{-1}_0\bar{Y}_nr^{-4}\,\td\Sigma\nonumber\\
&\leq&C\int_{\Omega_{\rho_0}}\rho^{-4}\bar{Y}^t_n\lambda^{-1}_0\bar{Y}_n\,\td\Sigma \label{ineq3}
\end{eqnarray}
The first inequality follows from the Cauchy-Schwarz inequality $u^tw\leq (u^tu)^{1/2}(w^tw)^{1/2}$. Second inequality is by part 1 and 3 of Definition \ref{Def1}. The third inequality is by the fact $\rho\leq r^2$.  Then by inequality \eqref{ineq3} and \eqref{ineq4} we have
\begin{equation}
a(\varphi_n,\varphi_n)+C\int_{\Omega_{\rho_0}}\rho^{-4}\bar{Y}^t_n\lambda^{-1}_0\bar{Y}_n\,\td\Sigma \geq 8\int_{\mathbb{R}^3}\left(\nabla\bar{v}_n\right)^2\,\td\Sigma  \geq 8\int_{\mathbb{R}^3}\left(\nabla\bar{v}_n\right)^2 r^{-2}\,\td\Sigma\label{ineq5}
\end{equation} the last inequality is by Theorem 1.2-(i) of \cite{bartnik1986mass}.
Now if we take the limit of inequality \eqref{ineq5} and by the fact the right hand side is zero by \eqref{firstY}, we have
\begin{equation}
\lim_{n\to \infty}\int_{\mathbb{R}^3}\left(\nabla\bar{v}_n\right)^2 r^{-2}\,\td\Sigma=0\label{vinfty}
\end{equation}
Thus by Lemma \ref{Poincare}-(a)  we have
\begin{equation}\label{vnvanish}
\lim_{n\to \infty}\norm{\bar{v}_n}_{\mathcal{H}_1}=0
\end{equation}
Now we consider the last term of $F(0)$. We have the following inequality
\begin{equation}
a(\varphi_n,\varphi_n)+\int_{\Omega_{\rho_0}}\text{Tr}\bigg[\left(\frac{\nabla Y_0\bar{Y}^t_n\lambda^{-1}_0}{X_0}\right)^2\bigg]\,\td\Sigma \geq \frac{1}{2}\int_{\Omega_{\rho_0}}\text{Tr}\bigg[\left(\nabla\left(\bar{\lambda}_n\lambda^{-1}_0\right)\right)^2\bigg]\,\td\Sigma \label{ineq8}
\end{equation}
The integrand of the second term on the left hand side has vanishing determinant since $\det\left(\nabla Y_0\bar{Y}^t_n\lambda^{-1}_0\right)=\frac{\det\left(\nabla Y_0\bar{Y}^t_n\right)}{\rho^2}=0$. Thus by the matrix identity $\text{Tr}(A^2)=\left(\text{Tr}A\right)^2-2\det A$ and inequality \eqref{ineq3} we have
\begin{eqnarray}
\int_{\Omega_{\rho_0}}\text{Tr}\bigg[\left(\frac{\nabla Y_0\bar{Y}^t_n\lambda^{-1}_0}{X_0}\right)^2\bigg]\,\td\Sigma &\leq&C\int_{\Omega_{\rho_0}}\rho^{-4}\bar{Y}^t_n\lambda^{-1}_0\bar{Y}_n\,\td\Sigma \label{ineq7}
\end{eqnarray}
By relation \eqref{lambdaandprime}
the right hand side expands  
\begin{eqnarray}
\text{Tr}\bigg[\left(\nabla\left(\bar{\lambda}_n\lambda^{-1}_0\right)\right)^2\bigg]
&=&2\left[\nabla\bar{v}_n\right]^2+\text{Tr}\left[\left(\nabla\bar{\lambda}'_n\lambda'^{-1}_0\right)^2\right]+\text{Tr}\left[\left(\bar{\lambda}'_n\nabla\left(\lambda'^{-1}_0\right)\right)^2\right]\nonumber\\&+&2\text{Tr}\left[\nabla\bar{\lambda}'_n\left(\frac{\text{adj}\bar{\lambda}'_n}{\rho^2}\right)\nabla\lambda'_0\lambda'^{-1}_0\right]
\end{eqnarray}
By integration we have
\begin{eqnarray}
\int_{\Omega_{\rho_0}}\text{Tr}\bigg[\left(\nabla\left(\bar{\lambda}_n\lambda^{-1}_0\right)\right)^2\bigg]\,\td\Sigma&=&\int_{\mathbb{R}^3}2\left[\nabla\bar{v}_n\right]^2\,\td\Sigma+\int_{\Omega_{\rho_0}}\text{Tr}\left[\left(\nabla\bar{\lambda}'_n\lambda'^{-1}_0\right)^2\right]\,\td\Sigma \nonumber\\
&\geq&\int_{\mathbb{R}^3}2\left[\nabla\bar{v}_n\right]^2\,\td\Sigma+C_1^2\int_{\Omega_{\rho_0}}\abs{\nabla\bar{\lambda}'_n}^2\rho^{-2}\,\td\Sigma\qquad\label{ineq6}
\end{eqnarray}
The equality is by identity \eqref{Crazyidentity}. The inequality is by part 2 of Definition \ref{Def1}. Then by substitution of inequalities \eqref{ineq6} and \eqref{ineq7} in \eqref{ineq8} we have
\begin{eqnarray}
a(\varphi_n,\varphi_n)+C\int_{\Omega_{\rho_0}}\rho^{-4}\bar{Y}^t_n\lambda^{-1}_0\bar{Y}_n\,\td\Sigma \geq \int_{\mathbb{R}^3}\left[\nabla\bar{v}_n\right]^2\,\td\Sigma+\frac{C_1^2}{2}\int_{\Omega_{\rho_0}}\abs{\nabla\bar{\lambda}'_n}^2\rho^{-2}\,\td\Sigma
\end{eqnarray}
Now if we take the limit from both side of this inequality and use equation \eqref{vinfty} we have
\begin{equation}
\lim_{n\to\infty}\int_{\Omega_{\rho_0}}\abs{\nabla\bar{\lambda}'_n}^2\rho^{-2}\,\td\Sigma=0
\end{equation}
Thus by Lemma \ref{Poincare}-(b) we have
\begin{equation}\label{lambdavan}
\lim_{n\to\infty}\norm{\bar{\lambda}'_n}_{\mathcal{H}_2}=0
\end{equation}
Thus \eqref{Ynvanish}, \eqref{vnvanish} and \eqref{lambdavan} contradict the fact that $\norm{\varphi_n}_{\mathcal{H}}=1$. Hence $\mu>0$.
\end{proof}

\section{Proof of Theorem \ref{main theorem}}\label{proof}
\begin{proof} The proof is straightforward and similar to the proof of theorem 1 of \cite{dain2006proof} and Chapter 40-B of \cite{zeidler1989nonlinear}.
\begin{enumerate}[(a)]
\item  We have proved in Lemma \ref{uniformlemma} that $\mathcal{E}''_{\varphi}(t)$ is $C^2$ with respect to $t$. Also by Taylor's theorem we have
\begin{equation}
\mathcal{M}(u_0+\varphi)-\mathcal{M}(u_0)=\mathcal{E}_{\varphi}(1)-\mathcal{E}_{\varphi}(0)=\frac{\mathcal{E}''_{\varphi}(t)}{2}\quad 0<t<1
\end{equation}
To prove this is positive we will show $\mathcal{E}''_{\varphi}(t)\geq 0$ and $\mathcal{E}''_{\varphi}(t)=0$ implies $\varphi=0$.
By Lemma \ref{uniformlemma}-(b) $\mathcal{E}''_{\varphi}(t)$ is uniformly continuous,  that is  for every $\epsilon>0$ there exist $\eta(\epsilon)$ such that the following inequality holds
\begin{equation}
\abs{\mathcal{E}''_{\varphi}(t)-\mathcal{E}''_{\varphi}(0)}\leq \epsilon \norm{\varphi}^2_{\mathcal{H}}
\end{equation}
for every $\norm{\varphi}_{\mathcal{H}}<\eta(\epsilon)$. From this inequality we have
\begin{equation}
\mathcal{E}''_{\varphi}(0)-\epsilon \norm{\varphi}^2_{\mathcal{H}}\leq \mathcal{E}''_{\varphi}(t)
\end{equation}
By Lemma \ref{coercive} we have
\begin{equation}
(\mu-\epsilon) \norm{\varphi}^2_{\mathcal{H}}\leq \mathcal{E}''_{\varphi}(t)
\end{equation}
Choosing $\eta(\epsilon)$ such that $0 < \epsilon<\mu$ the desired result follows.

\item Let $u= u_0 + \varphi$ be the associated $t-\phi^i$ symmetric part of the initial data set $(\Sigma,h,K)$ as in the statement of Theorem \ref{main theorem}. It was proved that the ADM mass of this data satisfies \cite{alaee2014mass}
\begin{equation}\label{ADMineq}
m\geq\mathcal{M}(u) = \mathcal{M}(u_0 + \varphi)
\end{equation}
Then by part (a) we have
\begin{equation}
\mathcal{M}(u_0+\varphi)> \mathcal{M}(u_0)
\end{equation}
for nonzero $\varphi$. Since $u_0$ is an extreme data, there exists a function $f$ such that $\mathcal{M}(u_0)=f(J_1,J_2)$. Thus
\begin{equation}
m \geq f(J_1,J_2)
\end{equation}

Clearly,  by definition if the initial data is extreme, then $m=f(J_1,J_2)$ . Conversely, suppose the mass $m$ of given initial data $(\Sigma,h,K)$ satisfies $m=f(J_1,J_2)=\mathcal{M}(u_0)$.  Hence $\varphi =0$ and $u= u_0$ and from \eqref{ADMineq} and Remark \ref{remark1} the initial data is extreme. Thus $m=f(J_1,J_2)$ if and only if  the data belongs to the extreme class.
\end{enumerate}
\end{proof}
\subsection*{Acknowledgments}
 We would like to thank S. Dain for comments on the use of the Carter identity in his article  \cite{dain2006proof} and also for clarifying the relationship between spacetime Weyl and quasi-isotropic coordinates for non-extreme data.  HKK also thanks James Lucietti for discussions concerning the uniqueness theorem for extreme black holes \cite{figueras2010uniqueness}.  We also would like to thank the referees for suggesting a number of improvements.  AA is partially supported by a graduate scholarship from Memorial University. HKK is supported by an NSERC Discovery Grant.  

\appendix
\numberwithin{equation}{section}

\bibliographystyle{unsrt}
\bibliographystyle{abbrv}  
\bibliography{masterfile}
       
\end{document}